\begin{document}
\title{Minimizing the Size of the Uncertainty Regions for Centers of Moving Entities\thanks{This work was partially funded by 
NSERC Discovery Grants
and
the Institute for Computing, Information and Cognitive Systems (ICICS) at UBC.
}}
\titlerunning{Minimizing the Size of the Uncertainty Regions for Centers}
%
\author{William Evans \and
Seyed Ali Tabatabaee}
\authorrunning{W. Evans and SA. Tabatabaee}
%
\institute{Department of Computer Science, University of British Columbia, Vancouver, Canada\\
\email{\{will,salitaba\}@cs.ubc.ca}}
\maketitle              
\begin{abstract}
In this paper, we study the problems of computing the $1$-center, centroid, and $1$-median of objects moving with bounded speed in Euclidean space. We can acquire the exact location of only a constant number of objects (usually one) per unit time, but for every other object, its set of potential locations, called the object's \emph{uncertainty region}, grows subject only to the speed limit.
As a result, the center of the objects may be at several possible locations, called the center's uncertainty region.
For each of these center problems, we design query strategies to minimize the size of the center's uncertainty region and compare its performance to an optimal query strategy that knows the trajectories of the objects, but must still query to reduce their uncertainty.
For the static case of the $1$-center problem in $\mathbb{R}^1$, we show an algorithm that queries four objects per unit time and is $1$-competitive against the optimal algorithm with one query per unit time.
For the general case of the $1$-center problem in $\mathbb{R}^1$, the centroid problem in $\mathbb{R}^d$, and the $1$-median problem in $\mathbb{R}^1$, we prove that the Round-robin scheduling algorithm is the best possible competitive algorithm.
For the center of mass problem in $\mathbb{R}^d$, we provide an $O(\log{n})$-competitive algorithm.
In addition, for the general case of the $1$-center problem in $\mathbb{R}^d$ ($d \geq 2$), we argue that no algorithm can guarantee a bounded competitive ratio against the optimal algorithm.

\keywords{Data in motion \and Uncertain inputs \and Center problems \and Online algorithms.}
\end{abstract}
\section{Introduction}
\label{ch:introduction}


Many real-world problems, such as controlling air traffic and providing service to cellular phones, involve moving entities. Therefore, analyzing moving objects has become a topic of interest within the area of theoretical computer science. In many problems, the movement of the objects is unpredictable and data processing must be done in real-time. Moreover, obtaining the exact location of an object at any point in time often entails a cost. Hence, knowing the precise location of all objects at any time is impractical. Instead, for every object, a region of potential locations that we call its uncertainty region is known. Due to the movement of the objects, the size of the uncertainty region for each object grows over time unless its exact location is acquired. The target is to design cost-effective algorithms for highly accurate analysis of moving objects.


Centers have been used to represent a given point set. They have applications in data clustering and facility location. For a given set of points in the Euclidean space, the three most common centers are the following:

\begin{enumerate}
    \item \textbf{$1$-center:}
    Given a set of $n$ points in $\mathbb{R}^d$, the \emph{$k$-center} is a set of $k$ facility locations such that the farthest distance from an input point to its closest facility, is minimized.  For $k=1$, this is the center of the smallest ball that contains the input set.

    \item \textbf{centroid:}
    Given a set of $n$ points in $\mathbb{R}^d$, the \emph{$k$-centroid}, which is the solution to the \emph{$k$-means problem}, is a set of $k$ facility locations that minimize the sum of squared distances between each input point and its closest facility.
    The 1-centroid is called the centroid.
    Given a set of $n$ weighted points in $\mathbb{R}^d$, the \emph{center of mass} is the average position of all points, weighted according to their masses.
    When all input points have unit weights, the center of mass is the centroid.
    
    \item \textbf{$1$-median:}
    Given a set of $n$ points in $\mathbb{R}^d$, the $k$-median is the set of $k$ facilities such that the average distance from an input point to its closest facility is minimized.
\end{enumerate}

\subsection{Model and Definitions}
\label{ch:model}

Given a center function and a set of the initial locations of $n$ moving objects ($n \geq 2$) in $\mathbb{R}^d$ 
where the speed of every object is bounded by $v$, we consider the problem of finding query strategies
(which can query once at the end of each unit of time) 
for minimizing a measure which is the maximum size of the uncertainty region for the specified center
over all query times.
We define the size of a region as the maximum pairwise distance between the points of that region.
We only care about the uncertainty regions at the times when a query has just been made and one object has an uncertainty region of size $0$.
Considering that for some instances of the problem the 
measure is large for every query strategy, we analyze the performance of our algorithms in a competitive framework,
where the competitive factor is the ratio of our algorithm's measure to that of the optimal algorithm. We assume that the optimal algorithm knows the trajectories of the objects, however it must still query to keep the uncertainty regions for the moving objects, and hence the uncertainty region of their center, small.

In some parts of this paper, we consider the weighted version of the problem where each object has a positive weight. However, we mostly discuss the unweighted version of the problem; hence, we assume that all objects have unit weights unless stated otherwise.
Figure~\ref{fig:com} illustrates the weighted version of the centroid problem (the center of mass problem) for a small set of weighted objects with uncertain locations.
We also consider a version of the problem where objects have different maximum speeds. But, we assume that all objects have the same maximum speed unless stated otherwise.
We define the static case of our problem as a special case where although objects have a maximum speed of $v$ and their uncertainty regions grow accordingly, they are actually static (they do not move).
To achieve better and more meaningful competitive ratios, we may allow our algorithms to query 
more than 
one object
per unit time.

\begin{figure}[ht]
    \centering
    \includegraphics{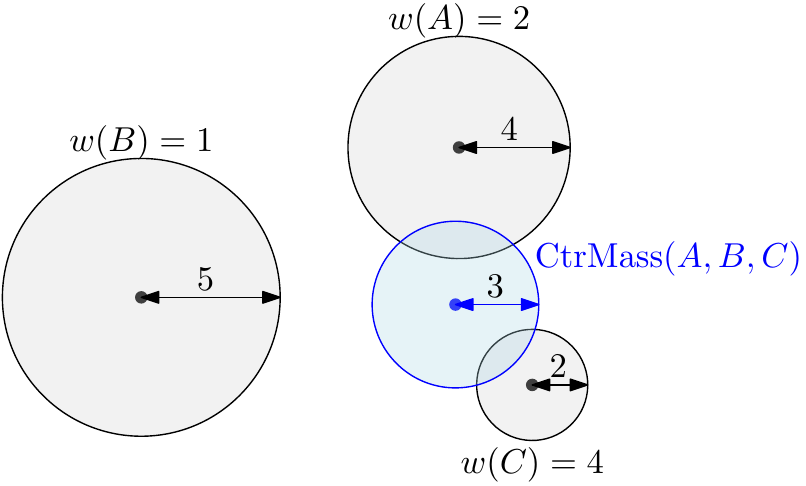}
    \caption{The uncertainty regions for three weighted objects $A$, $B$, and $C$, along with 
    the potential locations (uncertainty region) of
    their center of mass.}
    \label{fig:com}
\end{figure}

\subsection{Contribution and Organization}

In this paper, we study the aforementioned center problems with moving entities in the Euclidean space. We investigate the problem of designing competitive query strategies for minimizing the size of the uncertainty region of these different centers.
%
For the static case of the $1$-center problem in $\mathbb{R}^1$, we show that there exists an algorithm that queries a constant number of objects per unit time and is $1$-competitive against the optimal algorithm with one query per unit time. 
For the $1$-center problem with moving entities in $\mathbb{R}^1$, we provide an algorithm with the best possible competitive ratio of $O(n)$. For this problem in $\mathbb{R}^d$ ($d \geq 2$), we show that no algorithm can guarantee a bounded competitive ratio against the optimal algorithm.
For the centroid problem with moving entities in $\mathbb{R}^d$ ($d \geq 1$), we present an algorithm that is $1$-competitive against the optimal algorithm.
For the weighted version of the centroid problem (the center of mass problem) in $\mathbb{R}^d$ ($d \geq 1$) where objects have different maximum speeds, we provide an $O(\log{n})$-competitive algorithm.
For the $1$-median problem with moving entities in $\mathbb{R}^1$, we give an algorithm with the best possible competitive ratio of $O(n)$.
Several of our algorithms use \emph{Round-robin} scheduling for a subset of the entities, which repeatedly queries the entities in the subset in a fixed order.


The rest of this paper is organized as follows. Section~\ref{ch:background} provides background information on the introduced center problems and computing functions with uncertain inputs and moving data. Sections~\ref{ch:1center}, \ref{ch:centerofmass}, and \ref{ch:1median} study the $1$-center problem, the centroid problem, and the $1$-median problem, respectively, in the presented model. Finally, Section~\ref{ch:conclusion} concludes the results presented in this paper and lists some open problems.
\section{Background}
\label{ch:background}

The $k$-center problem and its special case, the $1$-center problem (also known as the minimum enclosing ball problem) have various applications, such as data classification and facility location. The $k$-center problem is known to be NP-hard in the Euclidean plane~\cite{megiddo1984complexity}. 
However, polynomial-time 2-approximation algorithms exist for the problem in any metric space~\cite{hochbaum1985best, feder1988optimal}.
The Euclidean $1$-center problem always has a unique solution.
Exact polynomial-time algorithms~\cite{chrystal1885problem, welzl1991smallest} exist for the problem in fixed dimensions. When the dimension is not fixed, polynomial-time approximation algorithms with a factor of $1 + \epsilon$ exist~\cite{yildirim2008two}. The speed of the Euclidean $1$-center of moving objects with bounded speed is unbounded in $\mathbb{R}^d$ ($d \geq 2$)~\cite{bereg2006competitive}.
Durocher~\cite{durocher2006geometric} 
provided bounded-velocity approximations for the Euclidean $k$-center of moving objects with bounded speed when $k \leq 2$
, and showed
that no bounded-velocity approximation is possible for the problem when $k \geq 3$.

The $k$-means problem
has applications in data clustering and facility location. This problem has been proved to be NP-hard in $\mathbb{R}^d$ even for $k = 2$~\cite{drineas2004clustering}. Lloyd's method~\cite{lloyd1982least} has been widely used to find a local minimum for the objective function of the $k$-means problem. This method starts with an arbitrary $k$-clustering of input points and computes the centroids of the clusters. Then the method repeatedly assigns each point to its closest cluster center and recomputes the cluster centers.
Constant factor approximation algorithms have been proposed for the Euclidean $k$-means problem~\cite{kanungo2002local, ahmadian2019better}. The best known approximation guarantee for this problem is 6.357~\cite{ahmadian2019better}.
The center of mass is a unique point
where the aggregate mass of a set of objects is concentrated.
The center of mass problem has applications in various fields including physics, astronomy, and engineering.
The speed of the center of mass is the sum of each object's momentum divided by the total weight of all objects. Hence, this speed is bounded by the maximum speed that objects can have.

The $k$-median problem is another well-studied problem with applications in data clustering and facility location. This problem is known to be NP-hard in the Euclidean plane~\cite{megiddo1984complexity}. Hence, constant factor approximation algorithms have been proposed for this problem~\cite{charikar1999constant, ahmadian2019better}. The best known approximation ratio for this problem is 2.633~\cite{ahmadian2019better}.
The $1$-median problem, also known as the Fermat-Weber problem~\cite{drezner2004facility}, is a special case of the $k$-median problem.
The Euclidean $1$-median is unique when the input points are not collinear \cite{kupitz1997geometric} or when the number of input points is odd. However, if the number of input points is even and the points are collinear, then any point that lies on the line segment between the two middle input points is a $1$-median of the points. By convention, the Euclidean $1$-median of such a set of points is defined as the midpoint of the two middle points in the set.
In general, the exact position of the Euclidean $1$-median cannot be calculated using radicals over the field of rationals when the number of points is greater than or equal to five~\cite{bajaj1988algebraic}. Consequently, $(1 + \epsilon)$-approximate solutions have been designed for this problem~\cite{badoiu2002approximate, cohen2016geometric}.
The Euclidean $1$-median moves discontinuously in $\mathbb{R}^d$ ($d \geq 2$)~\cite{durocher2006geometric}.
Durocher~\cite{durocher2006geometric} 
provided bounded-velocity approximations for the Euclidean $1$-median of moving objects with bounded speed, and showed
that no bounded-velocity approximation is possible for the Euclidean $k$-median problem when $k \geq 2$.

Computing functions with uncertain inputs has been the subject of multiple research projects~\cite{feder2000computing, khanna2001computing, suyadi2012computing}. Geometric problems have also been studied with uncertainty and methods such as witness algorithms have been proposed to address those problems~\cite{bruce2005efficient}. Due to the unpredictability of the movements in many problems that involve moving objects, the real-time processing of moving data is closely related to computing with uncertainty. Kahan~\cite{kahan1991real, kahan1991model} studied the maximum problem, the sorting problem, and some geometric problems with moving data, where the target was to reduce data acquisition costs. Furthermore, competitive query strategies have been developed for problems where the number of queries per unit time is bounded~\cite{evans2013competitive, evans2014query, zheng2020scheduling}.
\section{The 1-Center Problem}
\label{ch:1center}

In this section, we consider the $1$-center problem with moving entities.
We first consider a \emph{static} version of the problem
where all queries happen to return the object's original location,
though we must query to confirm this.
Solutions in this case are simpler than in the \emph{general} case, in which the movement of objects may change their importance in calculating the center, however the static case is still challenging.  

\subsection{The Static Case in \texorpdfstring{$\mathbb{R}^1$}{R\^1}}
\label{sc:1cstatic}

For the static case of the $1$-center problem in $\mathbb{R}^1$, we prove the existence of an algorithm that queries four objects per unit time and is $1$-competitive against the optimal algorithm with one query per unit time.

We begin by introducing a lemma which we will use to prove the existence of an algorithm with the aforementioned guarantee.
This lemma considers a special case of the problem of windows scheduling without migration \cite{bar2007windows} and entails the notion of pinwheel scheduling~\cite{holte1989pinwheel, chan1992general}. Given a multiset of positive integers $A = \{a_1, ..., a_n\}$, the pinwheel scheduling problem asks for an infinite sequence over $\{1, ..., n\}$ such that any integer $i \in \{1, ..., n\}$ appears at least once in any $a_i$ consecutive entries of the sequence. The density of $A$ is defined as $d(A) = \sum_{i=1}^{n}{\frac{1}{a_i}}$. A necessary condition for schedulability is $d(A) \leq 1$~\cite{holte1989pinwheel}. Moreover, $d(A) \leq 0.75$ is a sufficient condition for schedulability~\cite{fishburn2002pinwheel}.
Based on these results, we prove the lemma in Appendix~\ref{app:sched}.

\begin{lemma} \label{lem:sched}
Given a multiset of positive integers $A$ where $d(A) \leq 2$, it is possible to partition $A$ into three subsets, which are themselves multisets, such that each subset has a pinwheel schedule.
\end{lemma}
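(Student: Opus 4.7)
The plan is to partition $A$ into three subsets, each either of density at most $3/4$ (pinwheel-schedulable by the Fishburn--Lagarias bound cited just above the lemma) or of a simple explicit form (a singleton or a pair $\{a,a\}$ with $a\ge 2$) that admits a direct schedule. First I would isolate every element with $a_i=1$ into its own subset; a singleton is trivially schedulable. Since $d(A)\le 2$, at most two such elements exist, consuming at most two of the three subsets, and any remaining elements have $a_i\ge 2$. If exactly one element with $a_i=1$ exists, the remaining items have total density at most $1$ and each density at most $1/2$, so a Longest Processing Time (LPT) assignment to the two remaining bins yields makespan at most $S/2+p_{\max}/2 \le 1/2+1/4=3/4$. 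If two exist, no other items remain and the third subset is empty.

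When all $a_i\ge 2$, I would case on the count $c$ of elements with $a_i=2$; the density bound forces $c\le 4$. If $c=4$, then $A=\{2,2,2,2\}$, and the explicit partition $\{2,2\},\{2\},\{2\}$ works, with $\{2,2\}$ scheduled by $1,2,1,2,\ldots$\ (density $1$, each element appearing every two slots). If $c\in\{2,3\}$, I would peel off one pair $\{2,2\}$ as its own explicitly scheduled subset, reducing the remaining density by $1$; what is left has total density at most $1$ and individual densities at most $1/2$, and LPT packs it into the other two subsets at density at most $3/4$ by the same bound as above. If $c\le 1$, after placing the lone $a_i=2$ element (if any) into a bin by itself, the remaining items all have densities at most $1/3$ and must be distributed across three bins of combined capacity $9/4$.

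The main obstacle will be this last sub-case, since the raw LPT bound on three machines, $S/3+(2/3)p_{\max}$, can exceed $3/4$ even for items of size at most $1/3$ and total at most $2$: for example three items of density $1/3$ together with four of density $1/4$ cause a naive LPT to overshoot, although they admit the valid partition $\{3,3\},\{3,4\},\{4,4,4\}$ with densities $2/3$, $7/12$, $3/4$. I expect to handle this by a local-exchange argument: start from an arbitrary 3-partition and repeatedly either move a small item from a bin of density exceeding $3/4$ to a lighter bin, or swap a larger item in the heavy bin with a smaller item in a lighter bin. The density slack $3\cdot 3/4 - d(A)\ge 1/4$, combined with the cap $p_{\max}\le 1/3$, should be enough to guarantee that such a move or swap always exists and strictly decreases the maximum bin density, terminating at a valid partition.
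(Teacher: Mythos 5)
Your outer cases are fine (isolating $1$'s, peeling off $\{2,2\}$ pairs, the LPT bound $S/2+p_{\max}/2\le 3/4$ on two bins), but the proof collapses exactly where you flag the difficulty, and the gap is real, not just technical. First, in the sub-case $c=1$ your capacity accounting is wrong: once the element $2$ occupies half of one bin, the three bins have combined remaining capacity $3/4\cdot 3-1/2=7/4$, not $9/4$, so your slack computation $9/4-d(A)\ge 1/4$ does not apply. Worse, the plan ``put the $2$ in a bin by itself and pack the rest into the other two bins'' provably fails on $A=\{2,3,3,3,3,6\}$ (density exactly $2$): the residue $\{3,3,3,3,6\}$ has density $3/2$ and admits \emph{no} split into two parts each of density at most $3/4$, since no subset has density exactly $3/4$. (The instance is still fine --- e.g.\ $\{2,6\},\{3,3\},\{3,3\}$ --- but only by violating your ``by itself'' rule, and your exchange argument as stated has no reason to move items into the $2$'s bin.) Second, the termination claim for the local-exchange step is asserted, not proved: with total density up to $2$, a bin exceeding $3/4$ only guarantees some other bin is below $5/8$, and moving an item of size up to $1/3$ into it can push it to $5/8+1/3>3/4$; you would need to show a suitable \emph{swap} always exists and makes progress, which is the entire content of the hard case and is missing.

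The paper closes this case with a tool you do not use: Round-robin schedulability of small homogeneous groups whose density \emph{exceeds} $3/4$ --- two elements from $\{2,3\}$, four from $\{4,5,6\}$, five $7$'s --- which lets it carve off a schedulable chunk of density at least $2/3$ and reduce to a two-bin statement for $d\le 4/3$; in the residual case it classifies elements into $\{2,3\}$, $\{4,5,6\}$, $\{7\}$, and $\ge 8$, seeds the three bins with the first three classes (each under $3/4$ by the case hypothesis), and adds the elements of value at least $8$ greedily to the least-loaded bin, where the invariant ``minimum bin density at most $(2-\frac1k)/3$'' gives $(2-\frac1k)/3+\frac1k\le\frac{2}{3}(1+\frac18)=3/4$ cleanly. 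If you want to salvage your route, you need either an analogous supply of explicitly schedulable high-density groups beyond singletons and $\{a,a\}$ pairs, or an actual proof of the three-bin packing claim for items of size at most $1/3$ (and of the modified claim when one bin is pre-loaded with $1/2$); neither is present.
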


Windows scheduling without migration is a restricted version of the windows scheduling problem \cite{bar2003windows}. 
Given a multiset of positive integers $A = \{a_1, ..., a_n\}$ and a positive integer $h$, the windows scheduling problem asks whether it is possible to schedule $n$ pages on $h$ channels, with at most one page on each channel at any time, such that the time between two consecutive appearances (on any channel) of the $i$-th page is at most $a_i$ ($1 \leq i \leq n$).
In windows scheduling without migration, all appearances of a page must be on the same channel.
Allowing migration would be fine for
our use of Lemma~\ref{lem:sched} in proving the following Theorem~\ref{thm:static}, even though our proof of the lemma does not make use of this flexibility.
It has been shown that windows scheduling is possible on $d(A) + O(\ln{(d(A))})$ channels \cite{bar2003windows}.
Nevertheless, Lemma~\ref{lem:sched} provides a better guarantee for the special case that it considers.


Now, we are ready to prove the main theorem in this subsection.

\begin{theorem} \label{thm:static}
For the static case of the $1$-center problem in $\mathbb{R}^1$, there exists an algorithm that queries four objects per unit time and is $1$-competitive against the optimal algorithm with one query per unit time.
\end{theorem}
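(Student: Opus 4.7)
The plan uses Lemma~\ref{lem:sched} to turn OPT's (offline) per-object query frequencies into a pinwheel schedule we can execute on a few channels. Since the objects are static and their initial positions are given, our algorithm knows every $p_i$ from time $0$; a query merely resets the uncertainty radius $v\tau_i$ of the queried object. Consequently we can treat OPT as offline, and without loss of generality its schedule is periodic. For each object $i$, let $w_i^*$ denote the maximum gap between OPT's consecutive queries of $i$. Since OPT has exactly one query per unit time, $w_i^* \geq 1$ and the density satisfies $\sum_i 1/w_i^* \leq 1$; moreover, OPT's uncertainty radius for object $i$ never exceeds $v w_i^*$.

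Using the integer windows $w_i = \lfloor w_i^* \rfloor$, we have $w_i \geq w_i^*/2$ (because $w_i^* \geq 1$), so the density $\sum_i 1/w_i \leq 2\sum_i 1/w_i^* \leq 2$. By Lemma~\ref{lem:sched}, the multiset $\{w_i\}$ partitions into three submultisets each admitting a pinwheel schedule, which the algorithm runs concurrently on three of its four query channels. This guarantees $\tau_i(t) \leq w_i \leq w_i^*$ at every time $t$. The fourth query per unit time is held as slack to phase the three pinwheel schedules in consistently and to absorb boundary effects of the floor rounding.

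Finally, I would conclude $1$-competitiveness by a monotonicity/dominance argument: the $1$-center uncertainty size in $\mathbb{R}^1$ is a monotone function of the $\tau_i$'s, since the envelopes $\min_i(p_i \pm v\tau_i)$ and $\max_i(p_i \pm v\tau_i)$ only move outward as any $\tau_i$ grows, and so for each worst moment of our schedule there is a corresponding moment in OPT's periodic schedule whose profile dominates ours, giving our measure at most $M^*$. The main obstacle is precisely this final dominance step, because OPT never has all $\tau_i^{\mathrm{OPT}}$ simultaneously equal to $w_i^*$; the resolution exploits the periodicity of OPT's schedule together with the per-object max-gap matching to pair each worst profile of our algorithm with a profile attained by OPT, so that the bound is against the true $M^*$ rather than the loose ``all-$w_i^*$'' worst case.
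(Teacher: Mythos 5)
There is a genuine gap, and it is exactly the one you flag yourself: the final ``dominance'' step. Matching OPT's per-object maximum gaps $w_i^*$ guarantees only that \emph{each} object's uncertainty radius stays below its individual worst under OPT; it does not bound the $1$-center measure by $M^*$, because OPT staggers its queries so that the extreme objects are never stale simultaneously, while your pinwheel schedule may make them stale at the same instant. A concrete failure: two objects at $0$ and $0.5$ with $v=1$. OPT alternates, so $w_1^*=w_2^*=2$ and its profiles are $(\tau_1,\tau_2)\in\{(0,1),(1,0)\}$, giving $M^*=1$; a schedule with windows $(2,2)$ run in phase on separate channels attains the profile $(1,1)$, giving measure $2$. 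No moment of OPT dominates $(1,1)$, so the pairing you invoke does not exist, and the claimed ``resolution via periodicity and max-gap matching'' is an assertion, not an argument. Reserving the fourth query as unstructured ``slack'' also discards the resource that is needed precisely to control this simultaneity effect at the two extreme objects. (A smaller issue: the reduction to a periodic OPT and the bound $\sum_i 1/w_i^*\le 1$ need a short limiting argument over long horizons, but these are repairable.)

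The paper's proof sidesteps OPT's schedule entirely. It proves an instance-dependent lower bound on \emph{every} strategy's measure, namely $\tfrac{1}{2}\bigl(b+\min(|x_n-x_1|,1)\bigr)$, where $b$ is the smallest value for which the windows $f(x_i,\cdot)=\lfloor|\cdot-x_i|\rfloor+1$ needed to confine all uncertainty regions to $[x_1-b,\,x_n+b]$ have density at most $1$ (so confinement to any smaller interval is pinwheel-infeasible), and the $\min(|x_n-x_1|,1)$ term records that the two extreme objects cannot both be fresh. It then matches this bound: halving $b$ at most doubles the density, so Lemma~\ref{lem:sched} confines all objects to $[x_1-\tfrac{b}{2},\,x_n+\tfrac{b}{2}]$ using three channels, and the fourth channel alternates between the objects at $x_1$ and $x_n$ to realize the $\min(|x_n-x_1|,1)$ term exactly. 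If you want to salvage your route, you would need to replace the per-object window matching with a lower bound of this combined form; the object-by-object comparison to OPT cannot by itself yield a ratio of $1$.
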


\begin{proof}
Without loss of generality, we assume that the maximum speed $v$ is unit.
We sort the objects based on their positions and let $x_i$ denote the position of the $i$-th object ($x_1 \leq ... \leq x_n$).
We define $f(x,y) = \lfloor |y-x| \rfloor + 1$. To prevent the uncertainty region of a static object at position $x$ from going beyond a point at position $y$, we need to acquire the exact location of that object at least once in every $f(x, y)$ queries.
We let $b$ denote the smallest positive value such that $\sum_{i=1}^{n}{\frac{1}{\min(f(x_i, x_n + b), f(x_i, x_1 - b))}} \leq 1$.
Hence, considering the necessary condition for schedulability~\cite{holte1989pinwheel}, no query strategy can ensure that the uncertainty region of each object stays within $[x_1 - b', x_n + b']$ for any value $b' < b$.

The size of the uncertainty region for the $1$-center in $\mathbb{R}^1$ is equivalent to the average of the size of the uncertainty region for the maximum and the size of the uncertainty region for the minimum. Given a set of points in $\mathbb{R}^1$, the maximum problem asks for the position of the maximum point and the minimum problem asks for the position of the minimum point in the set.
Regardless of the query strategy, at some point in the future, either the uncertainty region for the maximum will include $[x_n, x_n + b]$, or the uncertainty region for the minimum will include $[x_1 - b, x_1]$.
If $|x_n - x_1| \leq 1$, the union of the uncertainty regions for the maximum and the minimum will always include $[x_1, x_n]$. Otherwise, the union of the two uncertainty regions will always intersect $[x_1, x_1 + 1] \cup [x_n - 1, x_n]$ such that the total size of the intersection is at least $1$.
Consequently, regardless of the query strategy, the size of the uncertainty region for the $1$-center will be at least
$\frac{b + \min (|x_n - x_1|, 1)}{2}$
at some point in the future. For the optimal algorithm, this is a lower bound on the maximum size of the uncertainty region for the $1$-center.

We now explain how to use four queries per unit time and be $1$-competitive against the optimal algorithm with one query per unit time.
First, we show how to compute $b$.
We know that $0 \leq b \leq n$ (because the Round-robin algorithm can ensure that the uncertainty region of each object stays within $[x_1 - n, x_n + n]$). Furthermore, there exists an index $1 \leq i \leq n$ such that either $\lfloor x_i - x_1 + b \rfloor = x_i - x_1 + b$ or $\lfloor x_n + b - x_i \rfloor = x_n + b - x_i$ (otherwise, there exists a query strategy that can ensure that the uncertainty region of each object stays within $[x_1 - b', x_n + b']$, for some $b' < b$). Therefore, there are $O(n^2)$ possible values for $b$ and we can find $b$ using a binary search.
Further, we show that $f(x_i, x_n + b) \leq 2  f(x_i, x_n + \frac{b}{2})$. We have
\begin{align*}
    f(x_i, x_n + b) = & \ \lfloor x_n + b - x_i \rfloor + 1 \\
    \leq & \ \lfloor x_n + \frac{b}{2} - x_i \rfloor + \lfloor \frac{b}{2} \rfloor + 2 \\
    \leq & \ 2  \lfloor x_n + \frac{b}{2} - x_i \rfloor + 2 \\
    = & \ 2  f(x_i, x_n + \frac{b}{2}).
\end{align*}
Similarly, we have $f(x_i, x_1 - b) \leq 2  f(x_i, x_1 - \frac{b}{2})$. Hence, we deduce that
\[\sum_{i=1}^{n}{\frac{1}{\min(f(x_i, x_n + \frac{b}{2}), f(x_i, x_1 - \frac{b}{2}))}} \leq  \sum_{i=1}^{n}{\frac{2}{\min(f(x_i, x_n + b), f(x_i, x_1 - b))}} \leq 2.\]
Therefore, by Lemma~\ref{lem:sched}, we can use three queries per unit time to maintain the uncertainty regions of all objects within $[x_1 - \frac{b}{2}, x_n + \frac{b}{2}]$. We let the fourth query repeatedly switch between the object at position $x_1$ and the object at position $x_n$. This way, the total size of the intersection between $[x_1, x_n]$ and the union of the uncertainty regions for the maximum and the minimum will be at most $\min (|x_n - x_1|, 1)$. Consequently, using this algorithm, the size of the uncertainty region for the $1$-center will be at most 
$\frac{b + \min (|x_n - x_1|, 1)}{2}$
at any point in the future. Hence, the presented algorithm that queries four objects per unit time is $1$-competitive against the optimal algorithm with one query per unit time.
\qed
\end{proof}

\subsection{The General Case in \texorpdfstring{$\mathbb{R}^1$}{R\^1}}
\label{sc:1c1d}

For the general case of the $1$-center problem in $\mathbb{R}^1$, we prove that the Round-robin algorithm for querying objects 
achieves the best possible competitive ratio 
against the optimal algorithm.

\begin{theorem}
For the $1$-center problem in $\mathbb{R}^1$, the Round-robin scheduling algorithm keeps the maximum size of the uncertainty region within $O(v  n)$ at any point in the future and achieves the best possible competitive ratio of $O(n)$ against the optimal algorithm.
\end{theorem}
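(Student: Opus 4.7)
The plan is to establish three things: Round-robin keeps the $1$-center's uncertainty within $O(vn)$ at every future moment; any query strategy with one query per time step incurs $1$-center uncertainty at least $\Omega(v)$, giving Round-robin an $O(n)$ competitive ratio; and for every algorithm there is an instance forcing competitive ratio $\Omega(n)$.

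For the first two items, first observe that under Round-robin every object is queried once every $n$ time units, so each uncertainty region is always an interval of length at most $2vn$. A direct calculation in $\mathbb{R}^1$ shows that if object $i$ has uncertainty interval $[l_i,r_i]$, then the size of the $1$-center's uncertainty region is
\[
\tfrac{1}{2}\bigl[(\max_i r_i - \max_i l_i) + (\min_i r_i - \min_i l_i)\bigr];
\]
since each difference is bounded by $\max_i (r_i - l_i) \leq 2vn$, Round-robin's $1$-center uncertainty is at most $2vn = O(vn)$. For the competitive upper bound, any strategy with one query per time step has, at any moment, at most one object with uncertainty $0$ while the remaining $n - 1 \geq 1$ objects have uncertainty at least $2v$; a short case analysis using the formula shows that this forces the $1$-center uncertainty to be at least $v$, so dividing the $O(vn)$ upper bound by this $\Omega(v)$ lower bound yields competitive ratio $O(n)$.

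For the matching $\Omega(n)$ lower bound, fix any deterministic algorithm $A$ and place all $n$ objects at the origin. Consider the all-stationary trajectory run for $T = n/2$ time steps: every query returns $0$, so $A$'s schedule is determined by the all-zero transcript, and at most $T$ distinct objects can be queried, leaving at least two objects $i$ and $j$ untouched. The adversary now substitutes, for $i$ and $j$, trajectories moving at speed $v$ to the right and left respectively (the other objects remain at the origin). Since $A$ never queries $i$ or $j$, every response it sees is still $0$ and its behaviour is unchanged. At time $T$, the actual positions of $i$ and $j$ are $\pm vn/2$, but $A$'s uncertainty regions for them are still $[-vn/2, vn/2]$; together with a freshly queried stationary object at the origin, the max- and min-position uncertainties under $A$ are both $\Omega(vn)$, so $A$'s $1$-center uncertainty is $\Omega(vn)$. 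OPT, knowing the trajectories in advance, queries $i$ and $j$ alternately and keeps each of their uncertainties at $O(v)$, so the $1$-center uncertainty under OPT is $O(v)$. The ratio is thus $\Omega(n)$.

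The hardest part will be making the lower-bound construction work against an \emph{adaptive} $A$, which could in principle detect the divergence of $i$ and $j$ from its query results. Starting every object at the origin solves this cleanly: every query returns the same value before the adversary commits, so $A$'s schedule up to time $T$ is frozen by the all-zero transcript, and the adversary can choose $(i,j)$ after the fact among the objects $A$ never queries, making the modified trajectory indistinguishable from the stationary one.
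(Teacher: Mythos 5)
Your proposal is correct and mirrors the paper's proof in all three parts: Round-robin's $O(vn)$ bound via each object's interval length, the $\Omega(v)$ lower bound for any one-query-per-step strategy, and an adversary instance (all objects at the origin, with two of them retroactively designated as opposite-direction movers that a deterministic algorithm provably fails to query early) giving the matching $\Omega(n)$ ratio, which is the same idea as the paper's ``one mover right, one mover left'' example. The one step you only assert --- that a single degenerate interval among intervals of length at least $2v$ forces $\tfrac{1}{2}\bigl[(\max_i r_i-\max_i l_i)+(\min_i r_i-\min_i l_i)\bigr]\ge v$ --- does check out (if both summands were below $2v$, the objects realizing $\max_i l_i$ and $\min_i r_i$ would each have interval length below $2v$ and hence both be the just-queried object at position $p$, making the sum equal to $\max_i r_i-\min_i l_i\ge 2v$ after all), and it plays the role of the paper's slightly different argument via the objects with extreme last-known exact positions.
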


\begin{proof}
The Round-robin algorithm keeps the size of the uncertainty region of each object within $O(v  n)$ because it acquires the exact location of each object once in every $n$ queries. For $1 \leq i \leq n$, let $[s_i, e_i]$ denote the uncertainty region of the $i$-th object ($s_i$ is the starting point and $e_i$ is the ending point of the uncertainty region). At any time, let $m$ be the index of the object with the highest ending point ($e_m = \max_{1 \leq i \leq n}{e_i}$). Thus, $[s_m, e_m]$ includes the uncertainty region for the maximum. Consequently, the size of the uncertainty region for the maximum is less than or equal to $|e_m - s_m|$ which is bounded by $O(v  n)$. Similarly, the size of the uncertainty region for the minimum is bounded by $O(v  n)$ at any point in the future. Therefore, using the Round-robin algorithm, the size of the uncertainty region for the $1$-center (which is equivalent to the average of the size of the uncertainty region for the maximum and the size of the uncertainty region for the minimum) will never exceed $O(v  n)$.

For the optimal algorithm, at any point after the first query, between the object with the maximum last known exact position and the object with the minimum last known exact position, at least one will have not been queried last; thus, the size of the uncertainty region of that object will be $\Omega(v)$. Consequently, the size of the uncertainty region for at least one of the maximum or minimum will be $\Omega(v)$. Hence, the size of the uncertainty region for the $1$-center will be $\Omega(v)$ at any point after the first query. Considering that the Round-robin algorithm keeps the maximum size of the uncertainty region within $O(v  n)$, it achieves a competitive ratio of $O(n)$ against the optimal algorithm.

We show that $O(n)$ is the best possible competitive ratio against the optimal algorithm. We consider an example with $n$ objects initially located at the origin. One of those objects moves with a speed of $v$ in the positive direction and another one moves with a speed of $v$ in the negative direction. The rest of the objects do not move. In this example, the optimal algorithm acquires the exact location of each of the two moving objects once in every two queries and maintains the size of the uncertainty region for the $1$-center within $O(v)$.
However, any algorithm that does not know the future object trajectories may fail to query the two moving objects in its first $n-2$ queries;
hence, the size of the uncertainty region for the $1$-center can become $\Omega(v  n)$.
This is true even if we allow the algorithm to query a constant number of objects (instead of one) per unit time. For this reason, $O(n)$ is the best possible competitive ratio.
\qed
\end{proof}

The Round-robin algorithm keeps the maximum size of the uncertainty region for the $1$-center within $O(v  n)$, and so does the optimal algorithm. We show that even for the optimal algorithm, the maximum size of the uncertainty region can be $\Omega(v  n)$.
The proof is presented in Appendix~\ref{app:1c1dOmega}.

\begin{proposition} \label{prop:1c1dOmega}
The maximum size of the uncertainty region for the $1$-center can be $\Omega(v  n)$ for the optimal algorithm.
\end{proposition}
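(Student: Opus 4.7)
The plan is to exhibit a single instance of the $1$-center problem in $\mathbb{R}^1$ on which every query strategy that acquires one object per unit time---including the optimal one that knows the trajectories---is forced to have a $1$-center uncertainty region of size $\Omega(vn)$ at some measurement time. The instance I would use is the ``all-static-at-origin'' instance: place all $n$ objects at position $0$ initially, and give each of them the zero trajectory. This is within the speed bound and hence a legal input; although the optimal algorithm knows these constant trajectories, the model still requires it to query in order to prevent each object's uncertainty region from growing, so between queries object $i$'s uncertainty region is $[-v k_i,\,v k_i]$, where $k_i$ denotes the number of time units since $i$ was last queried.

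The first step is a pigeonhole argument at time $t=n$. By that point the optimal has spent exactly $n$ queries among $n$ objects, so there must be an index $j$ with $k_j(n)\ge n-1$: either some object is queried at least twice and another is never queried (giving $k=n$ for the latter), or every object is queried exactly once and the object queried first has $k=n-1$. At the same measurement time the object just queried has $k=0$, so in particular $\min_i k_i(n)=0$.

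The second step is to convert the largest $k_i$ into a lower bound on the $1$-center uncertainty. Since the $1$-center in $\mathbb{R}^1$ is $(\max_i y_i+\min_i y_i)/2$, I would observe that on this fully symmetric instance the uniform choices $y_i=+v k_i(n)$ and $y_i=-v k_i(n)$ are both feasible, and they drive $(\max_i y_i+\min_i y_i)/2$ to $+\tfrac{v}{2}(\max_i k_i+\min_i k_i)$ and $-\tfrac{v}{2}(\max_i k_i+\min_i k_i)$, respectively. By continuity of that map on the connected domain $\prod_i U_i(n)$, the $1$-center uncertainty region at time $n$ is the whole interval between these two extremes, of size
\[
v\bigl(\max_i k_i(n)+\min_i k_i(n)\bigr) \;=\; v\cdot\max_i k_i(n) \;\ge\; v(n-1) \;=\; \Omega(vn),
\]
which completes the proof.

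The only delicate point, and the one I would write out most carefully in the full proof, is the optimization of $\max_i y_i+\min_i y_i$ in the second step: in general $\max_i y_i$ and $\min_i y_i$ are coupled through the shared choice of $(y_1,\dots,y_n)$, but on this instance (all true positions equal and all uncertainty intervals centred at $0$) the coupling is trivially resolved because the uniform boundary choices push both extremes in the same direction simultaneously.
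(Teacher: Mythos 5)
Your proposal is correct and takes essentially the same route as the paper: the same all-static-at-origin instance, a pigeonhole argument showing some object goes unqueried for $n-1$ time units, and the observation that this forces the $1$-center's uncertainty region to have size at least $v(n-1)$. Your explicit verification that the two uniform boundary choices simultaneously extremize $\max_i y_i + \min_i y_i$ is a slightly more careful rendering of the step the paper handles via the max/min decomposition, but the argument is the same.
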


We now provide an upper bound on the competitive ratio of the Round-robin algorithm for the $1$-center problem where objects have different maximum speeds.
The proof is presented in Appendix~\ref{app:1cdiffspeed}.

\begin{proposition} \label{prop:1cdiffspeed}
For the $1$-center problem in $\mathbb{R}^1$ where objects have different maximum speeds, the Round-robin scheduling algorithm achieves a competitive ratio of $O(v_M  n / v_m)$ against the optimal algorithm, where $v_M$ is the highest maximum speed and $v_m$ is the lowest maximum speed.
\end{proposition}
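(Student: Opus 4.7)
The plan is to follow the structure of the uniform-speed theorem from Section~\ref{sc:1c1d}: prove that Round-robin's measure is at most $O(v_M\,n)$ on every instance, prove that the optimal algorithm's measure is at least $\Omega(v_m)$ on every instance, and then take the ratio.

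The upper bound is a direct generalization. In any window of $n$ consecutive time units, Round-robin queries each object at least once, so at any point the uncertainty region of object $i$ has size at most $2 v_i n \leq 2 v_M n$. As in the uniform-speed theorem, the uncertainty region for the maximum (respectively, for the minimum) over all objects is contained in the uncertainty region of the single object achieving the highest current upper bound (respectively, the lowest current lower bound), so each has size $O(v_M n)$. Since the $1$-center's uncertainty region size equals the average of these two sizes, it is also $O(v_M n)$ on every instance.

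The crux is the matching lower bound: at any time $t \geq 1$ after OPT's first query, the $1$-center uncertainty has size at least $v_m$. Just after OPT's query at time $t$, exactly one object (call it object $1$, at exact position $x_1$) has uncertainty region $\{x_1\}$, while every other object $i$ has an uncertainty region $[s_i,e_i]$ of size at least $2 v_i \geq 2 v_m$, because it was not queried in the last time unit. I would then argue by a short case analysis that the sum of the max uncertainty region size and the min uncertainty region size is always at least $2 v_m$, so that their average — the $1$-center uncertainty size — is at least $v_m$. The three cases are: (i) some $j\neq 1$ has $s_j>x_1$, so $j$'s interval lies entirely above $x_1$ and $\max_i e_i-\max_i s_i\geq e_j-s_j\geq 2 v_m$; (ii) some $j\neq 1$ has $e_j<x_1$, which symmetrically forces $\min_i e_i-\min_i s_i\geq 2 v_m$; (iii) neither of the above, so every interval $[s_i,e_i]$ contains $x_1$, $\max_i s_i=\min_i e_i=x_1$, and the sum equals $\max_i e_i-\min_i s_i\geq e_j-s_j\geq 2 v_m$ for any non-queried $j$ (which exists since $n\geq 2$).

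Combining the bounds yields $\mathrm{RR}/\mathrm{OPT}\leq O(v_M n)/\Omega(v_m)=O(v_M n/v_m)$ on every instance, as claimed. The main obstacle I anticipate is ensuring case (iii) is handled cleanly when the queried object itself attains $\max_i e_i$ or $\min_i s_i$; this should reduce to the trivial observations $\max_i e_i\geq e_j$ and $\min_i s_i\leq s_j$ for any non-queried $j$, which hold regardless of which object achieves the overall extrema.
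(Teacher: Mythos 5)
Your proposal is correct and takes essentially the same route as the paper's (much terser) appendix proof: bound Round-robin's measure above by $O(v_M\,n)$, bound the optimal algorithm's measure below by $\Omega(v_m)$, and divide. One small repair in your cases (i)--(ii): the inequality $\max_i e_i-\max_i s_i\geq e_j-s_j$ should be applied with $j$ chosen as the object attaining $\max_i s_i$ (which cannot be the queried object, since $\max_i s_i>x_1=s_1$), rather than to an arbitrary $j$ with $s_j>x_1$; with that choice the bound $\geq 2v_m$ follows because every non-queried object's interval has size at least $2v_m$.
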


\subsection{The General Case in \texorpdfstring{$\mathbb{R}^d$}{R\^d}}
\label{sc:1cdd}

For the general case of the $1$-center problem in $\mathbb{R}^d$ ($d \geq 2$), we show that the maximum size of the uncertainty region can be unbounded for the optimal algorithm.
The proof of the following proposition is inspired by the proof of Theorem 2 of \cite{bereg2006competitive} and can be found in Appendix~\ref{app:1csizeunb}.

\begin{proposition} \label{prop:1csizeunb}
The maximum size of the uncertainty region of the $1$-center in $\mathbb{R}^d$ ($d \geq 2$) can be unbounded for the optimal algorithm.
\end{proposition}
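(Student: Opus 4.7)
The plan is to mimic the construction used in the proof of Theorem~2 of~\cite{bereg2006competitive}, adapted from the unbounded-velocity setting to our unbounded-uncertainty setting. The central geometric fact we exploit is that in $\mathbb{R}^d$ for $d\geq 2$, the 1-center can sit at two widely separated locations for two nearby configurations of the input points: as a third point crosses the boundary of the diameter circle determined by two others, the MEC switches from being the diameter of the pair to being the circumscribing circle of the triple, and its center shifts along the perpendicular bisector by an amount that can be made very large. No such transition phenomenon is available in $\mathbb{R}^1$, which is why the argument is specific to $d\geq 2$.

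First I would set up an instance in $\mathbb{R}^2$ with a small number of objects: two anchors $A=(-L,0)$ and $B=(L,0)$ placed far apart, and a critical object $C$ whose trajectory carries it along the $y$-axis, crossing the diameter circle of $AB$. Any additional objects needed to reach a desired $n$ are placed near the origin as inert padding, still subject to the speed bound $v$ so that their uncertainty balls grow whenever they are not queried. When $C$ lies inside the $AB$-diameter disk, the 1-center is at the origin; when $C$ lies outside that disk, the 1-center is the circumcenter of $\{A,B,C\}$, which moves up the $y$-axis to a location that can be pushed arbitrarily far from the origin by suitable choice of $C$'s height together with the scale $L$.

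Next, I would argue that since the optimal algorithm is restricted to one query per unit time, it cannot simultaneously keep the uncertainty balls of the padding objects and of $C$ small; by timing the trajectories so that at the moment $C$ is near the critical horizontal line its uncertainty ball straddles that line, the 1-center's uncertainty region must contain MEC centers corresponding to configurations of $C$ on both sides of the transition. Scaling $L$ and $n$ appropriately pushes these two possible centers arbitrarily far apart, so the diameter of the 1-center uncertainty region exceeds any prescribed bound. The main obstacle is to force the straddling against an omniscient optimal algorithm: one must show that no query schedule can avoid the moment when $C$'s uncertainty ball necessarily contains points on both sides of the critical configuration line, and that the two resulting branches of possible 1-center positions remain far apart at that moment; this is where the balance between the pigeonhole-forced uncertainty on $C$ and the geometric amplification produced by the critical transition has to be calibrated carefully.
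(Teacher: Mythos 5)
There is a genuine gap, and it is geometric rather than merely a matter of unfinished bookkeeping: the transition you build your construction around does not amplify uncertainty. With $A=(-L,0)$, $B=(L,0)$ and $C=(0,h)$ crossing the circle of diameter $AB$, the $1$-center sits at the origin for $h\le L$ and at $\bigl(0,\tfrac{h^2-L^2}{2h}\bigr)$ for $h\ge L$; this is continuous in $h$ and its derivative with respect to $h$ is at most $1$. So if $C$'s uncertainty ball has radius $r$ and straddles the critical height, the two ``branches'' of possible centers are only $O(r)$ apart, not arbitrarily far apart. The center only moves far up the $y$-axis if $C$ itself moves far, which the optimal algorithm can track. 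This also dooms the scheduling half of your plan: the padding objects near the origin stay deep inside the enclosing ball and need only infrequent queries, so the optimal algorithm (indeed plain Round-robin) keeps every uncertainty ball at radius $O(vn)$ and hence the center's uncertainty region at size $O(vn)$ on your instance. No pigeonhole forces a harmful straddling, because there is no moment at which a small uncertainty in any single object produces a large uncertainty in the center.

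The amplification you need lives elsewhere, and once you have it the whole adversarial apparatus becomes unnecessary. Take three \emph{static} points $(-2L,0)$, $(2L,0)$ and $(0,G)$ with $G$ of order $L^2/v$, chosen so that the segment from $(0,G)$ to $(2L,-v)$ is a diameter of a circle enclosing $(-2L,0)$: the triangle is exactly at the threshold between the acute case (minimum enclosing circle is the circumcircle, centered on the line $x=0$) and the obtuse case (minimum enclosing circle has the segment from $(0,G)$ to the perturbed anchor as diameter, centered on the line $x=L$). A perturbation of size $v$ of one anchor in the direction perpendicular to the long axis therefore moves the center by $L$. Since the first query can pin down only one object, at least one of the two anchors has an uncertainty ball of radius $v$ immediately afterwards, and the center's uncertainty region already has diameter at least $L$, which is arbitrary. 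So the statement follows from a one-shot argument with three static points; the sensitivity must be placed in the anchors (with the third point at an extreme distance), not in a point crossing the diameter circle of the other two.
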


Next, we show that it is impossible to find an algorithm that works well compared to the optimal algorithm.

\begin{theorem} \label{thm:1crd}
The maximum size of the uncertainty region of the $1$-center of points in $\mathbb{R}^d$ ($d \geq 2$) for any
algorithm that does not know the trajectories may be an arbitrary factor larger than that obtained by the
optimal algorithm.
\end{theorem}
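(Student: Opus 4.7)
The plan is to extend the geometric sensitivity in Proposition~\ref{prop:1csizeunb} to a many-object setting and exploit the adversary's freedom to hide the three objects on which the 1-center depends most sensitively. First I would fix an online algorithm $\mathcal{A}$ and a target ratio $R$, then choose $n \gg R$, $L \gg vn$, and $G \gg L$ so that in the arms-plus-tower configuration $\{(-2L,0), (2L,0), (0,G)\}$, a displacement of magnitude $\delta$ in an arm object shifts the 1-center by roughly $\delta G/(8L)$ in the linear regime; this is exactly the amplification implicit in Proposition~\ref{prop:1csizeunb}, and the factor $G/L$ can be taken as large as needed.

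The adversarial instance places three ``critical'' objects at the arms $(-2L,0)$, $(2L,0)$ and tower $(0,G)$, and $n-3$ ``decoy'' objects at (or near) the circumcenter of the critical triangle, so that they sit strictly inside the minimum enclosing circle and remain there even as their uncertainty disks grow at rate $v$. To keep the identity of the critical objects hidden from $\mathcal{A}$, I would have all $n$ objects start at a single common point; since $\mathcal{A}$ is deterministic, the adversary simulates it on the null (all-static) instance and then designates, as the three critical objects, a triple that $\mathcal{A}$ queries least often during some window of $n$ consecutive steps, afterwards specifying trajectories that move those three out to the arms and tower during a short preamble while the others remain at the starting point. Because every object's initial location and every pre-preamble query response is identical, this designation is consistent with everything $\mathcal{A}$ has observed.

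Given this instance, OPT, knowing the trajectories, uses queries during the preamble to pin down the critical objects and afterwards round-robins them (spending only a vanishing fraction of queries on decoys to keep them bounded well inside the MEC). Each critical object is then queried every constant number of steps, so its uncertainty disk has radius $O(v)$, which amplifies to a 1-center uncertainty of $O(vG/L)$. For $\mathcal{A}$, pigeonhole over a window of $n$ consecutive queries guarantees a moment at which some critical object has an uncertainty disk of radius $\Omega(vn)$; since $vn \ll L$ keeps us in the linear amplification regime, $\mathcal{A}$'s 1-center uncertainty at that moment is $\Omega(vnG/L)$, which is $\Omega(n) > R$ times OPT's bound.

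The hard part will be the hiding step: I must argue that $\mathcal{A}$ cannot use its early query responses to identify the critical objects before $\mathcal{A}$'s uncertainty for them has already grown to $\Omega(vn)$. The adaptive-adversary formulation above (simulate $\mathcal{A}$, then commit to trajectories) formalizes this cleanly for deterministic $\mathcal{A}$, but I must also verify that the bad moment for $\mathcal{A}$ occurs once the Proposition~\ref{prop:1csizeunb} amplification geometry is in force --- i.e., after the preamble ends --- and that accumulated uncertainty from the decoys does not distort the worst-case analysis; since decoy uncertainty only enlarges the 1-center uncertainty region, this can only strengthen the lower bound on $\mathcal{A}$, but it must be stated explicitly.
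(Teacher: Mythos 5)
There is a genuine gap, and it is structural rather than a matter of missing details. In your construction the \emph{realized} configuration is the hypersensitive one: the three critical objects actually sit at $(-2L,0)$, $(2L,0)$, $(0,G)$, so the amplification factor $G/(8L)$ (valid only while the perturbation stays below the threshold $\approx 8L^2/G$ at which the minimum enclosing circle switches from the circumcircle to a two-point circle, after which the center's displacement saturates at $\approx L$) applies to OPT exactly as it applies to $\mathcal{A}$. OPT's critical objects have uncertainty radius $\Theta(v)$, giving a center uncertainty of $\Theta(vG/L)$; the best you can charge $\mathcal{A}$ with is an object uncertainty of $O(vn)$, giving a center uncertainty of $O(\min(vnG/L,\,L))$. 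The amplification cancels in the ratio, which is therefore $O(n)$ no matter how $L$ and $G$ are chosen. Since the paper treats an $O(n)$ competitive ratio as a \emph{bounded} (indeed achievable) guarantee in $\mathbb{R}^1$, an $\Omega(n)$ lower bound does not prove this theorem: the statement requires the ratio to be unbounded even for fixed $n$, which the paper achieves with a ratio of $L/O(vn)$ for arbitrary $L$ and fixed $n\ge 28$. A second, related problem is the hiding step you flag: transporting the designated critical objects from the common start point to the arms and tower takes $\Theta(G/v)$ time steps, and once they have separated from the decoys, $\mathcal{A}$ identifies them within one round of $n$ queries (any object observed away from the start is critical) and thereafter tracks them as well as OPT does. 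So $\mathcal{A}$ is only ``blind'' for an $O(n)$-step window, again capping the gap at $O(n)$; and during the window in which it is blind the amplifying geometry is not yet in force. This obstacle is not repairable within your framework.

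The paper's construction avoids both problems by making the realized world benign for OPT and only the \emph{counterfactual} world catastrophic. It places $A$ at $(0,G)$, $B$ at $(-2L,0)$, and $n-2$ objects co-located at a point $\mathbf{e}$ beyond $(2L,0)$; one of them ($E$) keeps moving outward forever while the other $n-3$ retreat to $(2L,0)$. Along the true trajectories the minimum enclosing circle is always the one with diameter $AE$, so its center is the midpoint of two objects that OPT queries every $O(1)$ steps, and OPT's center uncertainty is $O(vn)$ with no amplification at all. Meanwhile, an algorithm ignorant of the trajectories may spend its first $n-3$ queries on the wrong members of the co-located group, at which point it is still consistent with all observations that \emph{every} object starting at $\mathbf{e}$ is now at $(2L,0)$, a configuration whose $1$-center lies near $x=0$; since the true center has $x>L$, the algorithm's center uncertainty is at least $L$, arbitrarily large. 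If you want to salvage your approach, you would need an analogous asymmetry: a realized trajectory under which the center is determined in a Lipschitz way by objects OPT can afford to track, together with an indistinguishable alternative that moves the center by an amount independent of $n$ and $v$.
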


\begin{proof}
We provide an example for which we prove that no algorithm that does not know the trajectories of the objects can guarantee a bounded competitive ratio against the optimal algorithm. Although we explain the example in $\mathbb{R}^2$ here, we could consider the same example in more than two dimensions, and the proof would be very similar.
Let $\mathbf{a} = (0, G)$, $\mathbf{b} = (-2L, 0)$, $\mathbf{c} = (2L, 0)$, $\mathbf{u} = \mathbf{c} - \mathbf{a}$, and $\mathbf{e} = \mathbf{c} + v (n-3) \mathbf{\hat{u}}$, where $n \geq 28$, $L$ is arbitrarily large (much larger than $n$ and $v$), and $G$ is much larger than $L$ in a way that for $y \geq \frac{G}{3}$, we have
$\sqrt{100 L^2 + y^2} \leq y + v$.
We consider an example with $n$ objects where one object (denoted by $A$) is always located at $\mathbf{a}$, one object (denoted by $B$) is always located at $\mathbf{b}$, one object (denoted by $E$) starts at $\mathbf{e}$ and always moves with a maximum speed of $v$ in the direction of $\mathbf{\hat{u}}$, and $n-3$ objects start at $\mathbf{e}$ and move with a maximum speed of $v$ in the direction of $-\mathbf{\hat{u}}$ until they stop at $\mathbf{c}$.

For the above example, the actual $1$-center is always the midpoint of 
the positions of $A$ and $E$.
Consequently, the x-coordinate of the actual $1$-center is always greater than $L$.
Nevertheless, any algorithm that does not know the future object trajectories may fail to query $E$ in its first $n-3$ queries.
Thus, for any such algorithm after the first $n-3$ queries, $\mathbf{c}$ is a potential location of each of the $n-2$ objects that were initially at $\mathbf{e}$, $\mathbf{a}$ is a potential location of $A$, and $\mathbf{b}$ is a potential location of $B$. Therefore, the $1$-center could potentially be located on the line $x = 0$. Consequently, for any algorithm that does not know the future object trajectories, the size of the uncertainty region for the $1$-center can become greater than $L$ (which is arbitrarily large), after the $(n-3)$-th query.

Now, we only need to prove that the optimal algorithm can keep the maximum size of the uncertainty region for the $1$-center within a function of $n$ and $v$ for the example mentioned above. We consider an algorithm that queries each of $A$, $B$, and $E$ once in every four queries and each other object once in every $4  (n-3)$ queries. Hence, the sizes of the uncertainty regions of $A$, $B$, and $E$ will never exceed $6v$. Also, the size of the uncertainty region of each of the other objects will always be $O(v  n)$. We prove that using this algorithm, the maximum size of the uncertainty region for the $1$-center is at most some function of $n$ and $v$ for the provided example. This will prove the same for the optimal algorithm which works at least as well as the aforementioned algorithm.

Let us disregard $B$ for now and analyze the uncertainty region for the $1$-center of the other objects. 
For those objects, the radius of the minimum enclosing circle 
is at most half of the distance between the two farthest points in the uncertainty regions of $A$ and $E$. This is because the line segment between those points is a diameter of a circle that contains the uncertainty regions of all objects.
Therefore, after $q$ queries ($q \geq 0$) made by the algorithm described above, the radius of the minimum enclosing circle is at most $r_q = \frac{1}{2} (3v + |\mathbf{e} - \mathbf{a}| +qv) = \frac{1}{2} |\mathbf{c} - \mathbf{a}| + \frac{v}{2} (n + q)$. Considering that the distance between the $1$-center and some point in the uncertainty region of $A$ must be at most $r_q$ after the $q$-th query, the $1$-center is within distance $r_q + 3v$ from $\mathbf{a}$.
Furthermore, the distance between the $1$-center and some point in the uncertainty region of $E$ must be at most $r_q$ after the $q$-th query. Thus, the distance between the $1$-center and $\mathbf{e} + qv \mathbf{\hat{u}}$ is less than or equal to $r_q + 6v$. Consequently, the distance between the $1$-center and $\mathbf{a}$ is at least $|\mathbf{e} - \mathbf{a}| +qv - r_q - 6v = r_q - 9v$ after the $q$-th query.

The $1$-center of a set of points in the Euclidean space is within the convex hull of those points (otherwise, there exists a hyperplane that separates the $1$-center from the points, and by moving the $1$-center towards that hyperplane, we are reducing its distance to each point). Therefore, the uncertainty region for the $1$-center is always within the convex hull of the uncertainty regions of the objects. We deduce that at any time, the uncertainty region for the $1$-center of all objects other than $B$ is within a strip-like region inside the convex hull of the uncertainty regions of the objects (see Figure~\ref{fig:1crd}). At any time, the length of this strip is $O(v  n)$ because the size of the uncertainty region of each object is $O(v  n)$ and the centers of the uncertainty regions of the objects are collinear. Moreover, the width of this strip is at most $r_q + 3v - (r_q - 9v) = 12v$. Hence, the maximum size of the strip-like region is $O(v  n)$. Consequently, using the above algorithm, the maximum size of the uncertainty region for the $1$-center of all objects other than $B$ is $O(v  n)$.

\begin{figure}[ht]
\centering
\begin{subfigure}{.45\textwidth}
    \centering
    \includegraphics[width=.85\linewidth]{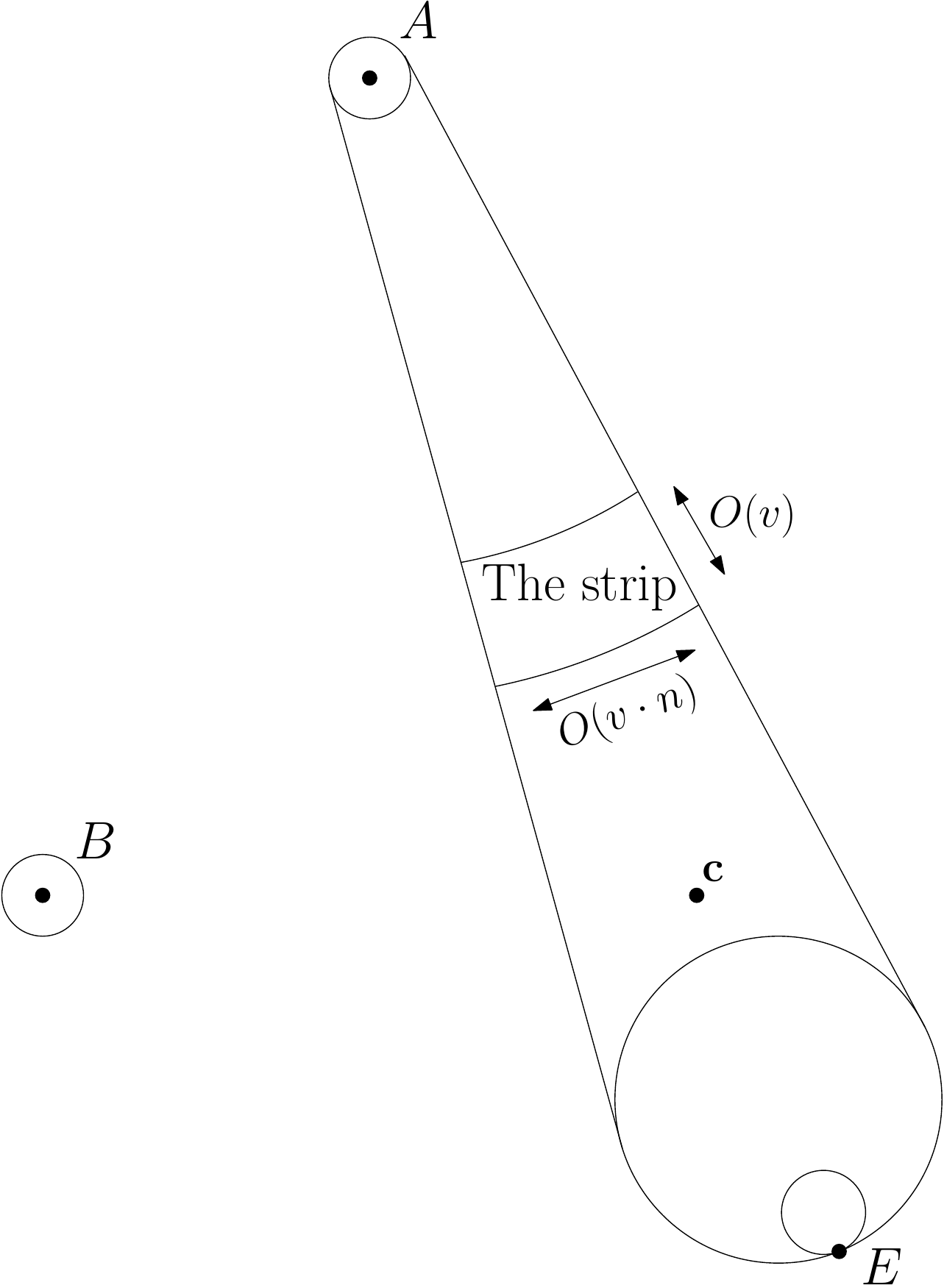}
    \caption{}
    \label{fig:1crd1}
\end{subfigure}
\begin{subfigure}{.45\textwidth}
    \centering
    \includegraphics[width=.85\linewidth]{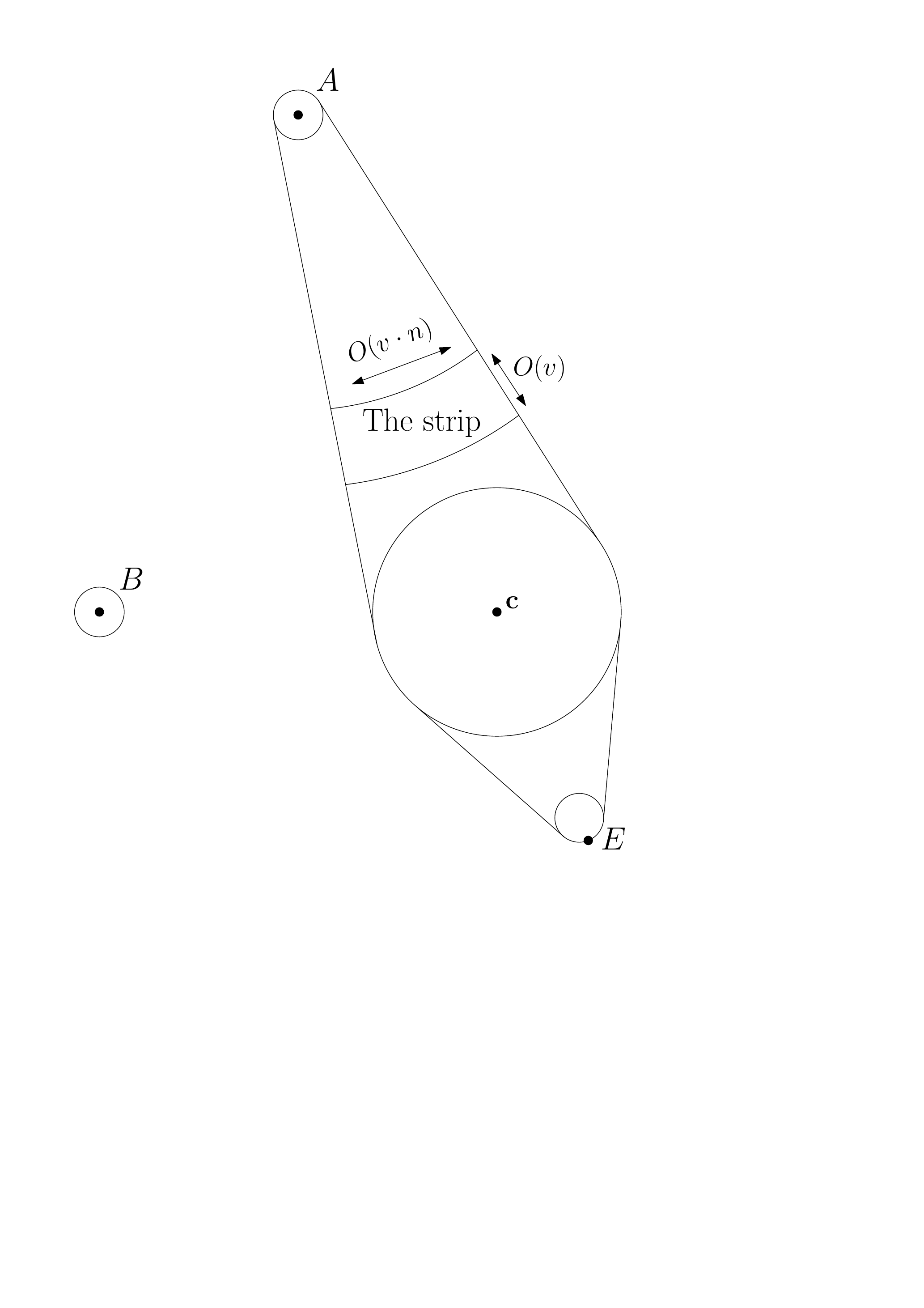}
    \caption{}
    \label{fig:1crd2}
\end{subfigure}
\caption{The uncertainty regions of the objects and the strip-like region that contains the $1$-center of all objects other than $B$ for the example and the algorithm provided in the proof of Theorem~\ref{thm:1crd} at the time (a) before the $(n-3)$-th query and (b) after the $(5n-15)$-th query.}
\label{fig:1crd}
\end{figure}

Let $\mathbf{d}$ represent the location of a point within the strip-like region. If the y-coordinate of $\mathbf{d}$ is within $[\frac{G}{3} + 3v, \frac{G}{2} - 4v]$, then the distance between $\mathbf{d}$ and any point in the uncertainty region $A$ is at least $\frac{G}{2} + v$. Also, in this case, the distance between $\mathbf{d}$ and any point in the uncertainty region $B$ is at most $\frac{G}{2}$. Furthermore, it is easy to see that if the y-coordinate of $\mathbf{d}$ is less than $\frac{G}{3} + 3v$, then the distance between $\mathbf{d}$ and any point in the uncertainty region of $B$ is less than the distance between $\mathbf{d}$ and any point in the uncertainty region of $A$.
If $r_q - 9v$ (the minimum distance between $\mathbf{a}$ and any point in the strip-like region) is at least $\frac{G}{2} + 5v$, then the y-coordinate of any point in the strip-like region is at most $\frac{G}{2} - 4v$, which means that 
the $1$-center of all objects other than $B$ is exactly the same as the $1$-center of all objects including $B$.
Considering that $n \geq 28$, we have
\[r_q - 9v = \frac{1}{2} |\mathbf{c} - \mathbf{a}| + \frac{v}{2} (n + q) - 9v \geq \frac{G}{2} + \frac{v}{2} (28 + 0) - 9v = \frac{G}{2} + 5v.\]
Hence, using the algorithm described above, the uncertainty region for the $1$-center of all objects including $B$ is always within the strip-like region.
Thus, the optimal algorithm keeps the maximum size of the uncertainty region for the $1$-center within $O(v  n)$ for the provided example.
\qed
\end{proof}
\section{The Centroid Problem}
\label{ch:centerofmass}

We know that at any time, the uncertainty region of each object is a ball centered at the last known exact position of the object.
We introduce a lemma which we will use to obtain the size of the uncertainty region for the center of mass or its special case, the centroid.
The proof of the lemma is provided in Appendix~\ref{app:CoM}.

\begin{lemma} \label{lem:CoM}
Given a set of $n$ weighted objects in $\mathbb{R}^d$ ($d \geq 1$) with uncertain locations, the uncertainty region for the center of mass is a ball centered at the weighted average of the centers of the objects' uncertainty regions. The radius of this ball is equivalent to the weighted average of the radii of the objects' uncertainty regions.
\end{lemma}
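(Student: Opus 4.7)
The plan is to parameterize each object's possible location relative to the center of its uncertainty ball, then show that the induced weighted average sweeps out precisely a ball. Let $c_i$ denote the center and $r_i$ the radius of the $i$-th object's uncertainty region, and let $w_i > 0$ be its weight. Every admissible position of object $i$ can be written as $p_i = c_i + x_i$ with $\|x_i\| \le r_i$. Since the center of mass is linear in the positions, the uncertainty region for the center of mass is exactly the set
\[
\Bigl\{\bar{c} + \bar{x} \,:\, \|x_i\| \le r_i \text{ for all } i\Bigr\}, \quad \text{where } \bar{c} = \frac{\sum_i w_i c_i}{\sum_i w_i}, \ \bar{x} = \frac{\sum_i w_i x_i}{\sum_i w_i}.
\]
So the task reduces to showing that the set of attainable values of $\bar{x}$ is exactly the ball of radius $\bar{r} := \frac{\sum_i w_i r_i}{\sum_i w_i}$ centered at the origin.

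For the containment direction (the attainable set lies inside the ball), I would apply the triangle inequality to the weighted average, writing $\|\bar{x}\| \le \frac{\sum_i w_i \|x_i\|}{\sum_i w_i} \le \frac{\sum_i w_i r_i}{\sum_i w_i} = \bar{r}$. For the reverse direction (every point of the ball is attainable), I would give an explicit construction: given any target $y$ with $\|y\| \le \bar{r}$, set $x_i = \alpha\, r_i\, \hat{y}$ where $\hat{y} = y/\|y\|$ and $\alpha = \|y\|/\bar{r}$. Then $\|x_i\| = \alpha r_i \le r_i$ since $\alpha \le 1$, and
\[
\bar{x} = \frac{\sum_i w_i \alpha r_i \hat{y}}{\sum_i w_i} = \alpha \bar{r} \hat{y} = y.
\]
The degenerate case $\bar{r} = 0$ (which forces every $r_i = 0$) is handled separately, and in that case the uncertainty region is the single point $\bar{c}$, matching the statement.

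The only mildly delicate step is the lower bound, since one must exhibit an admissible choice of displacements realizing every point of the ball; the direction-aligned construction above makes this routine. I expect no serious obstacle: the whole argument rests on the linearity of the weighted mean and the convexity/rotational symmetry of each uncertainty ball.
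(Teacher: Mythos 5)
Your proposal is correct and follows essentially the same argument as the paper: the triangle inequality gives containment in the ball, and your direction-aligned construction $x_i = \frac{r_i}{\bar r}\,y$ is exactly the paper's choice $\mathbf{u}_i = \mathbf{c}_i + \frac{r_i}{R}(\mathbf{U}-\mathbf{C})$ for surjectivity. Your explicit treatment of the degenerate case $\bar r = 0$ is a small point the paper's proof glosses over (its formula divides by $R$), but otherwise the two proofs coincide.
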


Lemma~\ref{lem:CoM} shows that the size of the uncertainty region for the center of mass is equivalent to the weighted average of the sizes of the objects' uncertainty regions.
Now, for the centroid problem, we prove that the Round-robin algorithm is $1$-competitive against the optimal algorithm. The proof of the following theorem is provided in Appendix~\ref{app:cent}.

\begin{theorem} \label{thm:cent}
For the centroid problem in $\mathbb{R}^d$ ($d \geq 1$), the maximum size of the uncertainty region for the Round-robin scheduling algorithm is $\Theta(v  n)$ and the Round-robin algorithm is $1$-competitive against the optimal algorithm.
\end{theorem}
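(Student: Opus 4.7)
My plan is to use Lemma~\ref{lem:CoM} to reduce the centroid analysis to the average of the objects' uncertainty region sizes, then match Round-robin's exact behavior against an information-theoretic lower bound that applies to every algorithm (including the optimal one).

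First I would establish the upper bound for Round-robin. Since each object's uncertainty region is a ball of radius $v \cdot \tau$ (size $2v\tau$) where $\tau$ is the time elapsed since it was last queried, and Round-robin queries each object once every $n$ steps, at any query moment the $n$ objects' uncertainty region sizes are exactly $0, 2v, 4v, \ldots, 2v(n-1)$. By Lemma~\ref{lem:CoM} (with unit weights), the centroid's uncertainty region size equals the average of these, namely
\[
\frac{1}{n}\sum_{k=0}^{n-1} 2vk \;=\; v(n-1) \;=\; \Theta(v\cdot n).
\]

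Next I would prove a matching lower bound that holds for every query strategy, including the optimal one that knows the trajectories. The key observation is that the model permits at most one query per unit time, so at any query moment $t$, the $n$ objects must have been last queried at $n$ distinct time steps, and sorting these times as $t_1 > t_2 > \cdots > t_n$ yields $t - t_i \geq i-1$. Consequently, the multiset of sizes of the $n$ uncertainty regions, when sorted, is pointwise at least $0, 2v, 4v, \ldots, 2v(n-1)$. Applying Lemma~\ref{lem:CoM} once more, the centroid's uncertainty region size is at least $v(n-1)$ at this (and hence at some) query moment. To make this an honest worst-case lower bound I would just consider an instance where all objects actually do move at speed $v$ so that the inequality between the radius of the uncertainty region and $v \cdot \tau$ is tight for the optimal algorithm too; a simple construction where each object moves on its own line in a direction unknown to any algorithm without trajectory information is not even needed, because even with trajectory knowledge the uncertainty radius under the bounded-speed assumption is $v\tau$ whenever the adversary chooses worst-case motion.

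Combining the two bounds, Round-robin achieves the same $v(n-1)$ upper bound that lower-bounds every algorithm's worst-case centroid uncertainty, so its competitive ratio is $1$ and its maximum uncertainty size is $\Theta(v\cdot n)$.

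The step I expect to require the most care is the lower bound: one must verify that ``at any query moment'' really means after the algorithm has queried enough times that the distinct-query-time counting argument kicks in (which it does after the first $n-1$ queries), and one must make sure the adversarial instance used to realize the lower bound is consistent with the bounded-speed assumption so that the optimal algorithm cannot exploit trajectory knowledge to shrink radii below $v\tau$.
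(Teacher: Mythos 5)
Your proposal is correct and follows essentially the same route as the paper: apply Lemma~\ref{lem:CoM} to reduce the centroid's uncertainty to the average of the objects' uncertainty-region sizes, compute Round-robin's sizes as $0, 2v, \ldots, 2v(n-1)$ for the upper bound, and use the fact that the $n$ last-query times are distinct to show the sorted sizes are pointwise at least $0, 2v, \ldots, 2v(n-1)$ for any algorithm, giving the matching lower bound $v(n-1)$. Your closing concern about constructing an adversarial motion is unnecessary in this model (the uncertainty region is by definition the full ball of radius $v\tau$ determined by the query history and speed bound, independent of the actual trajectory), but this does not affect the argument.
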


For the center of mass problem, the competitive ratio of the Round-robin algorithm is $\Omega(n)$ against the optimal algorithm.
This can be easily proved by considering an example where the weight of one of the objects is much higher than the weight of the other objects.
Next, we provide an algorithm for the center of mass problem where objects have different maximum speeds and prove that it guarantees a competitive ratio of $O(\log n)$ against the optimal algorithm.

\begin{theorem}
For the center of mass problem in $\mathbb{R}^d$ ($d \geq 1$) where objects have different maximum speeds, there exists an algorithm that achieves a competitive ratio of $O(\log{n})$ against the optimal algorithm.
\end{theorem}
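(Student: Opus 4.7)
The plan is to reduce the problem to scheduling via Lemma~\ref{lem:CoM}, obtain a lower bound on the optimal algorithm by convexity plus Cauchy--Schwarz, and then exhibit a bucketed round-robin schedule whose worst-case cost is within $O(\log n)$ of that bound. Set $\alpha_i = w_i v_i$ and $S = \sum_i \sqrt{\alpha_i}$. By Lemma~\ref{lem:CoM}, at any query time the uncertainty region of the center of mass has size $\frac{2}{W}\sum_i \alpha_i \Delta t_i$, where $W=\sum_i w_i$ and $\Delta t_i$ is the time since object $i$ was last queried; controlling the maximum of this weighted sum over all query times is the scheduling objective.

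For the lower bound, in the long run any strategy with one query per unit time assigns a frequency $f_i$ to each object with $\sum_i f_i \le 1$. If the inter-query intervals of object $i$ in $[0,T]$ have lengths $\ell_{i,1},\dots,\ell_{i,k_i}$ with $k_i \approx f_i T$, then $\int_0^T \Delta t_i\,dt = \tfrac12 \sum_j \ell_{i,j}^2 \ge T^2/(2 k_i)$ by Cauchy--Schwarz, so the time-average of $\Delta t_i$ is at least $1/(2f_i)$. Hence the time-average (and a fortiori the maximum) of $\sum_i \alpha_i \Delta t_i$ is at least $\tfrac{1}{2}\sum_i \alpha_i/f_i$, which is minimized at $f_i \propto \sqrt{\alpha_i}$ and equals $S^2/2$ under $\sum_i f_i \le 1$ (another application of Cauchy--Schwarz). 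Therefore the optimal algorithm's measure is $\Omega(S^2/W)$.

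For the upper bound, let $K = \lceil \log_2 n \rceil + 1$ and partition the objects into $K+1$ buckets: for $0 \le k < K$ put $G_k = \{i : \sqrt{\alpha_i} \in [S/2^{k+1},\, S/2^k)\}$, and let $G_K = \{i : \sqrt{\alpha_i} < S/2^K\}$. Since $\sum_i \sqrt{\alpha_i} = S$, we get $|G_k| \le 2^{k+1}$ for $k < K$, and $|G_K| \le n \le 2^{K-1}$ from the choice of $K$, so $|G_k| \le 2^{k+1}$ for all $k$. The algorithm cycles through the $K+1$ buckets in a fixed order and, within each bucket, queries its members in round-robin order. Hence an object $i \in G_k$ is queried at least once in every $(K+1)|G_k|$ time units, so $\Delta t_i \le (K+1)|G_k|$. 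Using $\alpha_i = \sqrt{\alpha_i}\cdot\sqrt{\alpha_i} \le \sqrt{\alpha_i}\cdot S/2^k$ for $i \in G_k$, a direct calculation gives $\sum_{i \in G_k} \alpha_i\,\Delta t_i \le (K+1)|G_k|\cdot(S/2^k)\sum_{i \in G_k}\sqrt{\alpha_i} \le 2(K+1)\,S \sum_{i \in G_k} \sqrt{\alpha_i}$. Summing over $k$ yields $\sum_i \alpha_i\,\Delta t_i \le 2(K+1)S^2 = O(\log n \cdot S^2)$, so the algorithm's measure is $O(\log n \cdot S^2/W)$, which together with the lower bound gives competitive ratio $O(\log n)$.

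The main obstacle is the tail bucket $G_K$, whose size can be as large as $n$ rather than geometric in $k$; the cutoff $K \approx \log_2 n$ is chosen precisely so that the small $\sqrt{\alpha_i}$ values in $G_K$ compensate its size, keeping its contribution $O((K+1)S^2)$ on the same footing as the earlier buckets. A secondary subtlety is verifying that the two-level cyclic schedule actually achieves $\Delta t_i \le (K+1)|G_k|$, but this is immediate from the fixed cyclic order; allowing a constant number of queries per unit time only affects constants in the final bound.
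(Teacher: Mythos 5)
Your proposal is correct and arrives at the same conclusion, but by a genuinely different route on both sides of the comparison. The paper also reduces to the scheduling objective $\frac{2}{W}\sum_i w_i v_i \Delta t_i$ via Lemma~\ref{lem:CoM} and also uses a two-level cyclic schedule over $O(\log n)$ groups, but it buckets by \emph{rank}: after sorting so that $w_1v_1\ge\cdots\ge w_nv_n$, object $i$ goes to group $\lfloor\log i\rfloor+1$, which yields the upper bound $O\bigl(\frac{\log n}{W}\sum_i i\,w_iv_i\bigr)$. Its lower bound is combinatorial rather than averaged: at the moment the optimal algorithm finally queries the $n$-th object, the elapsed times of the remaining $n-1$ objects are distinct positive integers, so by rearrangement the cost is at least $\frac{2}{W}\sum_{i<n} i\,w_iv_i$, matching the upper bound within $3(\lfloor\log n\rfloor+1)$. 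You instead bucket by the \emph{value} of $\sqrt{\alpha_i}$ (where $\alpha_i=w_iv_i$) relative to $S$, and lower-bound the optimum by $\Omega(S^2/W)$ via time-averaging and Cauchy--Schwarz. Since $S^2=\sum_{i,j}\sqrt{\alpha_i\alpha_j}\ge\sum_i(2i-1)\alpha_i$, your lower bound dominates the paper's and, being tight for the fractional frequencies $f_i\propto\sqrt{\alpha_i}$, gives a cleaner characterization of what the optimum can achieve; on the other hand the paper's schedule has the slightly sharper per-instance upper bound ($\sum_i i\,\alpha_i$ versus $S^2$ can differ, e.g.\ for $\alpha_i=1/i^2$). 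Both correctly certify the $O(\log n)$ ratio, and your bucket-size and tail-bucket accounting checks out.

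One step to tighten: the measure in this paper is the maximum over \emph{post-query} instants, at which the just-queried object contributes $0$ to the sum, so ``the maximum is a fortiori at least the time-average'' does not immediately apply to the quantity being measured. The patch is routine: within each unit interval $g(t)\le g(\lfloor t\rfloor^{+})+\sum_i\alpha_i$, and at any post-query instant the unqueried objects give $g\ge\sum_{i\ne j}\alpha_i$, which (choosing an instant where the heaviest object is not the one queried) is at least $\tfrac12\sum_i\alpha_i$; combining these with your averaging bound still yields a post-query maximum of $\Omega(S^2/W)$. Worth a sentence in the final write-up.
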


\begin{proof}
We sort the objects in a way that we have $v_1 w_1 \geq ... \geq v_n w_n$.
We then partition the objects into $\lfloor \log{n} \rfloor + 1$ groups. For $1 \leq i \leq n$, we assign the $i$-th object to the $(\lfloor \log{i} \rfloor + 1)$-th group.
Now, we describe the query strategy.
We query objects from different groups in a cyclic way.
Also, within each group, we choose the objects to query in a cyclic way.
For example, if we have seven objects, we partition the objects into three groups. We assign the first object to the first group, the second and third objects to the second group, and the remaining objects to the third group.
Consequently, our algorithm repeatedly follows the sequence $1, 2, 4, 1, 3, 5, 1, 2, 6, 1, 3, 7$ for querying the objects (entries of the sequence indicate the objects that our algorithm queries).

The algorithm described above acquires the exact location of one object from the group that contains the $i$-th object ($1 \leq i \leq n$) in every $\lfloor \log{n} \rfloor + 1$ queries and that group contains at most $2^{\lfloor \log{i} \rfloor}$ objects.
Therefore, the algorithm acquires the exact location of the $i$-th object at least once in every $(\lfloor \log{n} \rfloor + 1)  2^{\lfloor \log{i} \rfloor}$ queries.
Hence, using this algorithm, the maximum size of the uncertainty region for the center of mass is at most
$\frac{1}{W} \sum_{i=1}^{n}{2 ((\lfloor \log{n} \rfloor + 1)  2^{\lfloor \log{i} \rfloor} - 1)  v_i  w_i}$
(by Lemma~\ref{lem:CoM}), where $W = \sum_{i=1}^{n}{w_i}$ is the total weight of all objects.

On the other hand, to keep the maximum size of the uncertainty region for the center of mass finite, the optimal algorithm has to query the $n$-th object at some point. At that time, the size of the uncertainty region for the center of mass will be at least
$\frac{1}{W} \sum_{i=1}^{n-1}{2 i  v_i  w_i}$
(by Lemma~\ref{lem:CoM}).
We have
\begin{align*}
    \frac{3(\lfloor \log{n} \rfloor + 1)}{W} \sum_{i=1}^{n-1}{2 i  v_i  w_i} \geq & \ \frac{\lfloor \log{n} \rfloor + 1}{W} \sum_{i=1}^{n}{2 i  v_i  w_i} \\
    \geq & \ \frac{1}{W} \sum_{i=1}^{n}{2 ((\lfloor \log{n} \rfloor + 1)  i - 1)  v_i  w_i} \\
    \geq & \ \frac{1}{W} \sum_{i=1}^{n}{2 ((\lfloor \log{n} \rfloor + 1)  2^{\lfloor \log{i} \rfloor} - 1)  v_i  w_i}.
 \end{align*}
Therefore, for the center of mass problem, the competitive ratio of the algorithm described above is $O(\log{n})$ against the optimal algorithm.
\qed
\end{proof}
\section{The 1-Median Problem}
\label{ch:1median}
Given a set of $n$ points in $\mathbb{R}^1$, 
the $1$-median is the $\lceil \frac{n}{2} \rceil$-th smallest point if
$n$ is odd and the midpoint of the $\frac{n}{2}$-th and
$(\frac{n}{2}+1)$-th smallest points if $n$ is even.

\begin{restatable}{theorem}{TheoremOneMedMain}
\label{thm:1medmain}
For the $1$-median problem in $\mathbb{R}^1$, the Round-robin scheduling algorithm keeps the maximum size of the uncertainty region within $O(v  n)$ at any point in the future and achieves the best possible competitive ratio of $O(n)$ against the optimal algorithm.
\end{restatable}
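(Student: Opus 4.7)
My plan is to first prove the $O(vn)$ upper bound on Round-robin's measure. Since Round-robin queries each object once in every $n$ consecutive rounds, every uncertainty region has size at most $2v(n-1)$. The 1-median in $\mathbb{R}^1$ is an order statistic of the positions (or the midpoint of two adjacent ones), and order statistics are $1$-Lipschitz in the $\ell_\infty$ norm: if two realizations $(x_1,\ldots,x_n)$ and $(y_1,\ldots,y_n)$ satisfy $|x_i - y_i| \le U$ for every $i$, then the $k$-th smallest values differ by at most $U$. Hence the 1-median's uncertainty region has size at most $2v(n-1) = O(vn)$. For the competitive ratio of $O(n)$, I would then show that the optimal algorithm's measure is $\Omega(v)$ on any sufficiently rich input by an argument parallel to the one in the proof in Section~\ref{sc:1c1d}: at any query time at least one object has uncertainty $\ge 2v$, and on such inputs this shift propagates to the middle-ranked object and thereby to the 1-median.

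For the ``best possible'' direction, I plan to construct a hard input for any deterministic online algorithm $A$. Place $(n+1)/2$ objects at position $-L$ and $(n-1)/2$ at position $+L$, with $L \gg vn$. Simulate $A$ on the all-static history (every object remains at its initial position forever) to obtain $A$'s deterministic query sequence, and apply pigeonhole to find an object $o^*$ in the $-L$ group together with a gap $[t_1, t_2]$ of length at least $n$ between two consecutive queries of $o^*$ by $A$. The adversary's actual trajectory keeps $o^*$ static up to $t_1$ and then moves $o^*$ with speed $v$ in the $+$ direction. Since every response $A$ observes on the actual input up to time $t_2$ agrees with the all-static simulation (in particular, the query at $t_1$ sees $o^*$ at $-L$, before motion begins, and no other queries of $o^*$ are issued in $(t_1, t_2)$), $A$'s query sequence on the actual input matches the simulation up to $t_2$. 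At time $t_2 - 1$, the uncertainty region of $o^*$ has size $\ge 2v(n-1)$, and because $o^*$ sits at the middle rank---flanked by $(n-1)/2$ objects at $-L$ below and $(n-1)/2$ at $+L$ above---the 1-median's uncertainty region contains an interval of length $\Omega(vn)$.

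On the very same input, the optimal algorithm (knowing the trajectories) can maintain 1-median uncertainty $O(v)$ by querying $o^*$ at rate $1 - O(vn/L)$ (keeping $o^*$'s uncertainty at $O(v)$) while splitting the remaining $O(vn/L)$ total rate among the $n-1$ distant static objects at $\pm L$; each distant object's uncertainty region then stays well below $L$ in diameter and remains strictly on its own side of the origin, so they never enter the middle rank. This yields the $\Omega(n)$ competitive-ratio lower bound. The main obstacle is establishing the ``$A$ cannot detect the motion during the gap'' claim against an adaptive algorithm, which I would handle by noting that $o^*$'s motion begins only right after $A$'s query at $t_1$ sees $o^*$ at its starting position, so $A$ has no signal to distinguish $o^*$ from the other $-L$ objects until the next query at $t_2$, by which time the bad measure at $t_2 - 1$ has already been incurred.
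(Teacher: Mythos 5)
Your $O(vn)$ upper bound for Round-robin is fine (the $\ell_\infty$-Lipschitz property of order statistics is a clean substitute for the paper's $[s_k,e_k]$ formulation), but the other two parts have genuine gaps.

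The larger one is the claim that the optimal algorithm's measure is $\Omega(v)$ on every input, which you dispatch by analogy with the $1$-center argument of Section~\ref{sc:1c1d}. That analogy does not transfer. For the $1$-center, the extreme object that was not queried last immediately forces $\Omega(v)$ uncertainty in the max or min \emph{at every query time}. For the median this is false: with objects at $0$, $100$, $200$ and an algorithm that queries the middle object every round, the median's uncertainty region has size $0$ for a long stretch, even though two objects always have uncertainty $\ge 2v$. The $\Omega(v)$ bound holds only as a maximum over time, and proving it requires showing that the growing uncertainty of the outer objects eventually contaminates the middle rank no matter how the algorithm adapts --- this is exactly the content of the paper's Lemma~\ref{lem:klow}, a full page of case analysis and the hardest step of the whole theorem. ``This shift propagates to the middle-ranked object'' names the difficulty rather than resolving it, and ``on any sufficiently rich input'' is not enough: the competitive ratio needs the bound on every input.

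The hard-instance construction also fails as stated. Because you defer $o^*$'s motion until the time $t_1$ of a query gap, and $t_1$ can be arbitrarily late for an adversarially chosen $A$ (e.g., $A$ round-robins the $-L$ group so that every gap has length exactly $(n+1)/2$ and gaps keep recurring forever), the constructed input has a long initial phase in which $(n+1)/2$ static objects are co-located at $-L$ and occupy the median rank. During that phase the median is the maximum of that cluster, and by the pigeonhole argument of Proposition~\ref{prop:1m1dOmega} \emph{every} algorithm, including the optimal one, incurs median uncertainty $\Omega(vn)$. Since the measure is a maximum over all query times, OPT's measure on your input is $\Omega(vn)$ and the ratio is $O(1)$. (Separately, querying $o^*$ at rate $1-O(vn/L)$ leaves each of the $n-1$ remaining objects with query period $\Theta(L/v)$ and hence uncertainty diameter $\Theta(L)$, not ``well below $L$''; round-robinning them with half the budget is what you want.) The repair is the one the paper uses: apply pigeonhole to the first $\lceil n/2\rceil-1$ queries to find an object not yet queried, and have that object move at full speed $v$ from time $0$, so that its true position always stays at or ahead of the boundary of the other cluster members' uncertainty regions; then the known-trajectory algorithm that tracks it every other step keeps the median pinned to that object's position with uncertainty $O(v)$ at all times, while any online algorithm suffers $\Omega(vn)$.
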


\begin{proof} (sketch)
Let $s_k$ ($e_k$) be the $k$-th smallest starting
(ending) point of all $n$ objects' uncertainty regions.
Even though the region $[s_k,e_k]$ may not be the uncertainty region
of any particular object,
the $k$-th smallest object lies within this region,
which has size $O(v n)$, since Round-robin keeps the size of every
object's uncertainty region within $O(v  n)$.
The rest of the theorem follows from a lower bound of $\Omega(v)$ on the size of
$[s_k,e_k]$ 
and examples that show $O(n)$ is the best possible competitive ratio.
(See Appendix~\ref{app:1medmain}).
\qed
\end{proof}

The Round-robin algorithm keeps the maximum size of the uncertainty region for the $1$-median within $O(v  n)$. We argue that even for the optimal algorithm, the maximum size of the uncertainty region can be $\Omega(v  n)$.
The proof is presented in Appendix~\ref{app:1m1dOmega}.

\begin{proposition} \label{prop:1m1dOmega}
For the $1$-median problem in $\mathbb{R}^1$, the maximum size of the uncertainty region can be $\Omega(v  n)$ for the optimal algorithm.
\end{proposition}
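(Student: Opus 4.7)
The plan is to adapt the construction from Proposition~\ref{prop:1c1dOmega}: take $n$ static objects all initially located at the origin. Even though the trajectories are trivial (so the optimal algorithm gains nothing from knowing them), it must still query periodically to keep the uncertainty regions small. I will show that in this instance, the $1$-median's uncertainty region inevitably grows to size $\Omega(v  n)$ at some time in the future, regardless of the query schedule.

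The first step is a simple combinatorial observation. Since any admissible schedule makes at most one query per unit time, at any time $t \geq n$, letting $\tau_i \geq 0$ denote the number of time units since object $i$ was last queried and $\tau_{(1)} \leq \tau_{(2)} \leq \dots \leq \tau_{(n)}$ its sorted values, we have $\tau_{(k)} \geq k-1$ for every $k$. Indeed, the $j$ most recent queries touch at most $j$ distinct objects, so at most $j$ indices can satisfy $\tau_i \leq j-1$.

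The second step uses the standard characterization that the uncertainty region of the $k$-th smallest among $n$ points with interval uncertainty regions is exactly $[s_k, e_k]$, where $s_k$ and $e_k$ are the $k$-th smallest left and right endpoints across all objects. In our instance each uncertainty region is $[-v\tau_i, v\tau_i]$, so $s_k = -v\tau_{(n+1-k)}$ and $e_k = v\tau_{(k)}$. For odd $n$ the median is the $\lceil n/2 \rceil$-th smallest point, and its uncertainty region has size
\[
e_k - s_k = v\tau_{(k)} + v\tau_{(n+1-k)} = 2v\tau_{((n+1)/2)} \geq v(n-1).
\]
For even $n$, the median is the midpoint of the $(n/2)$-th and $(n/2+1)$-th smallest points; placing every object at $-v\tau_i$ (respectively $v\tau_i$) simultaneously realizes the minimum (respectively maximum) midpoint value $\mp v(\tau_{(n/2)}+\tau_{(n/2+1)})/2$, and by continuity the midpoint's uncertainty region has size $v(\tau_{(n/2)}+\tau_{(n/2+1)}) \geq v(n-1)$. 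Either way the bound is $\Omega(v  n)$, proving the proposition.

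The main obstacle is the even-$n$ case, because the uncertainty region of the midpoint of two order statistics is not automatically the interval between its individual extrema: one must exhibit a single placement of all objects inside their uncertainty regions that achieves each extremum simultaneously. The construction above handles this cleanly because positioning each object at the left (or right) endpoint of its uncertainty region preserves the ordering of the $\tau_i$'s, so the two central order statistics coincide with the prescribed endpoints and the extremal midpoint is attained.
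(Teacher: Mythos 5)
Your proof is correct and follows essentially the same route as the paper's: the same instance ($n$ static objects at the origin), the same counting argument that at most $j$ objects can have been queried in the last $j$ time units, and the same conclusion that the central order statistic's uncertainty region $[s_k,e_k]$ has size $\Omega(vn)$. Your treatment is somewhat more careful than the paper's — in particular the explicit verification that, for even $n$, the extremal midpoints of the two central order statistics are simultaneously realizable — but the underlying idea is identical.
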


Next, we provide an upper bound on the competitive ratio of the Round-robin algorithm for the $1$-median problem where objects have different maximum speeds.
The proof is presented in Appendix~\ref{app:1mdiffspeed}.

\begin{proposition} \label{prop:1mdiffspeed}
For the $1$-median problem in $\mathbb{R}^1$ where objects have different maximum speeds, the Round-robin scheduling algorithm achieves a competitive ratio of $O(v_M  n / v_m)$ against the optimal algorithm, where $v_M$ is the highest maximum speed and $v_m$ is the lowest maximum speed.
\end{proposition}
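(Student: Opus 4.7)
The plan mirrors the two-bound template of Proposition~\ref{prop:1cdiffspeed}: I would establish an $O(v_M n)$ upper bound on the Round-robin algorithm's 1-median uncertainty region, an $\Omega(v_m)$ lower bound on any algorithm's 1-median uncertainty region, and then divide.

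For the upper bound, I would re-run the sketch of Theorem~\ref{thm:1medmain} with $v$ replaced by $v_M$. Since Round-robin queries each object once every $n$ rounds and no object moves faster than $v_M$, every individual uncertainty region has size at most $2v_M(n-1) = O(v_M n)$. The interval-trapping argument --- the $k$-th smallest object lies in $[s_k,e_k]$, where $s_k$ and $e_k$ are the $k$-th smallest starting and ending points across all $n$ intervals --- is purely order-theoretic and carries through unchanged, so $|[s_k,e_k]| = O(v_M n)$ too. Setting $k = \lceil n/2 \rceil$ (and $k = n/2+1$ when $n$ is even) bounds the 1-median's uncertainty region by $O(v_M n)$.

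For the lower bound, I would reuse the $\Omega(v)$ argument already used in Theorem~\ref{thm:1medmain}, weakened to accommodate heterogeneous speeds: at any moment after the optimal algorithm's first query, at least one of the objects flanking the 1-median was not queried in the most recent round, and since every object moves at speed at least $v_m$, its uncertainty region has size at least $2v_m$. The same order-theoretic trap then forces $|[s_{\lceil n/2 \rceil},e_{\lceil n/2 \rceil}]|$, and hence the 1-median's uncertainty region, to be $\Omega(v_m)$. Dividing the upper and lower bounds yields the claimed $O(v_M n / v_m)$ ratio. The only mildly subtle step is checking that the trap from the uniform-speed case still produces an $\Omega(v_m)$ gap under heterogeneous speeds, but this is immediate: the object responsible for the gap has speed at least $v_m$ by definition of $v_m$, so the argument transfers without modification.
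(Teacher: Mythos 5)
Your overall structure is exactly the paper's: the proof of Proposition~\ref{prop:1mdiffspeed} is the same two-bound template as Proposition~\ref{prop:1cdiffspeed} --- Round-robin keeps the $1$-median's uncertainty region within $O(v_M n)$, the optimal algorithm cannot avoid $\Omega(v_m)$, and the ratio follows. Your upper-bound argument (each object queried once per $n$ rounds, so every region has size $O(v_M n)$, and the order-statistic trap $[s_k,e_k]$ inherits that bound) is exactly how Theorem~\ref{thm:1medmain} does it, and it transfers to heterogeneous speeds verbatim.

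One caution on the lower bound: the mechanism you describe --- ``one of the objects flanking the $1$-median was not queried in the most recent round, so its region has size $2v_m$, and the order-theoretic trap forces $e_k - s_k = \Omega(v_m)$'' --- is not a valid argument on its own, and it is not how the paper gets the bound. A single wide interval does not force $e_k - s_k$ to be large: with intervals $[0,0],[1,1],[2,2],[3,3],[-100,100]$ and $k=3$ one has $s_3=1$ and $e_3=2$, so the $k$-th smallest's region stays tiny while one object's region is enormous. Ruling out an adaptive strategy that keeps the median pinned while other regions balloon is precisely the content of Lemma~\ref{lem:klow}, whose proof is a delicate, multi-case contradiction argument over time, not a one-round observation. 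What actually transfers to the heterogeneous-speed setting is that entire argument with $v$ replaced by $v_m$ throughout (every object's region grows at rate at least $2v_m$ per unqueried round, so the $\tfrac{v}{2}$ thresholds become $\tfrac{v_m}{2}$); your conclusion is right, but you should cite and rerun Lemma~\ref{lem:klow} rather than the flanking-object shortcut.
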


Now, we consider the $1$-median problem in $\mathbb{R}^d$ ($d \geq 2$).
The proof of the following proposition is inspired by the proof of Theorem 5.1 of \cite{durocher2006geometric} and can be found in Appendix~\ref{app:1msizeunb}.

\begin{proposition} \label{prop:1msizeunb}
For the $1$-median problem in $\mathbb{R}^d$ ($d \geq 2$), the maximum size of the uncertainty region can be unbounded for the optimal algorithm.
\end{proposition}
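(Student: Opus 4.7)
The plan is to exhibit a family of adversarial instances in $\mathbb{R}^d$ ($d \geq 2$) in which the $1$-median is a discontinuous function of one designated object's position, so that the positive-size uncertainty region of that object, which even the optimal algorithm cannot avoid, translates into an arbitrarily large uncertainty region for the $1$-median. The construction leans on Theorem~5.1 of~\cite{durocher2006geometric}, which establishes that the Euclidean $1$-median can move discontinuously under continuous motion of the input points in $\mathbb{R}^2$.

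First I would specify the adversarial configuration. I would place $n-1$ static objects at positions prescribed by Durocher's discontinuity gadget and one static ``pivot'' object $X$ at the critical position $\mathbf{p}^*$ of the gadget. The gadget is calibrated so that the $1$-median equals a point $P_L$ whenever $X$ lies just on one side of $\mathbf{p}^*$ and $P_R$ whenever $X$ lies just on the other, with $|P_R - P_L| \geq L$ for a length parameter $L$ that can be made arbitrarily large (by scaling the entire configuration linearly, since the $1$-median is scale-equivariant while $X$'s uncertainty ball has radius only $v \Delta t$). Constructing and verifying this discontinuity gadget is the main obstacle, and I would import it essentially unchanged from the proof of Durocher's Theorem~5.1.

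Finally I would argue that no query schedule for the optimal algorithm avoids a blow-up. At every query instant, at most one object has a zero-radius uncertainty region, so at least one other object has positive-radius uncertainty. If at some query instant the algorithm queries an object other than $X$, then $X$'s uncertainty region is a ball of positive radius centered at $\mathbf{p}^*$ (since $X$ is static at $\mathbf{p}^*$), and therefore contains positions consistent with both sides of the discontinuity threshold; hence the $1$-median's uncertainty region contains both $P_L$ and $P_R$, and has diameter at least $L$. Otherwise the algorithm queries $X$ at every unit of time, in which case none of the $n-1$ other objects is ever queried, their uncertainty balls have radius growing linearly in time, and since the $1$-median always lies within the convex hull of the objects' possible positions, the $1$-median's uncertainty region is unbounded in this case as well. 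Sending $L \to \infty$ in the first case completes the proof.
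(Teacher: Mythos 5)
Your high-level idea---exploit the discontinuity of the Euclidean $1$-median so that an unavoidable positive-radius uncertainty ball translates into an arbitrarily large jump in the median---is exactly the idea behind the paper's proof, which is likewise inspired by Theorem~5.1 of Durocher. But there are two genuine gaps. First, the entire weight of your argument rests on a ``discontinuity gadget'' with a specific interface (a single pivot object $X$ at a critical position $\mathbf{p}^*$, with the median equal to $P_L$ on one side and $P_R$ on the other, $|P_L-P_R|\geq L$), and you never construct it; you only assert it can be imported. The off-the-shelf discontinuity example does not directly have this single-pivot, two-sided form, so the import is not ``essentially unchanged''---it requires a construction and a verification that both sides of the threshold lie inside a radius-$v$ ball around $\mathbf{p}^*$. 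The paper instead gives a concrete configuration, four static objects at $(0,0)$, $(0,v)$, $(L,0)$, $(L,v)$, and uses the fact that if two of four points coincide at $P$ and the other two are not on a common ray from $P$, then $P$ is the unique $1$-median. After any single query, at least one object of each vertical pair still has a radius-$v$ uncertainty ball, so there is a consistent realization in which the left pair coincides (median on $x=0$) and another in which the right pair coincides (median on $x=L$); this kills every query strategy after one query, with no case analysis and no designated pivot.

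Second, your Case~2 justification is a non sequitur: the fact that the $1$-median lies within the convex hull of the objects' possible positions bounds its uncertainty region from \emph{above}, and says nothing about that region being large. To make Case~2 work you would need a lower-bound argument, e.g., that once the $n-1$ unqueried objects' balls grow enough, a strict majority of the objects can consistently be placed at a common point $z$ anywhere in an unboundedly growing set, forcing the median to $z$. Both gaps are repairable, but as written the proof's central construction is missing and one of its two cases is unsupported.
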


\section{Conclusion}
\label{ch:conclusion}

We conclude that in the worst case, adjusting query strategies based on the answers to the previous queries and the perceived locations of the objects does not help in achieving better competitive algorithms for any of the center problems discussed in this paper.

Here we list a handful of interesting problems that remain open:
\begin{itemize}
    \item For the static case of the $1$-center problem in $\mathbb{R}^1$, does there exist an algorithm that queries less than four objects per unit time and is $1$-competitive against the optimal algorithm with one query per unit time, or an algorithm that queries one object per unit time and achieves a competitive ratio of $o(n)$?
    \item For the $1$-center ($1$-median) problem in $\mathbb{R}^1$ where objects have different maximum speeds, can we find an algorithm with a competitive ratio of $o(v_M  n / v_m)$, where $v_M$ is the highest maximum speed and $v_m$ is the lowest maximum speed?
    \item For the $1$-median problem in $\mathbb{R}^d$ ($d \geq 2$), is there an algorithm that guarantees a bounded competitive ratio against the optimal algorithm?
\end{itemize}

\bibliographystyle{plain}
\bibliography{references}

\appendix
\section{Proof of Lemma~\ref{lem:sched}}\label{app:sched}

\begin{proof}
First, we show that given a multiset of positive integers $B$ where $d(B) \leq \frac{4}{3}$, it is possible to partition $B$ into two subsets $B_1$ and $B_2$ such that each subset has a pinwheel schedule. If $B$ contains one element equivalent to $1$, then we assign that element to $B_1$ and the rest of the elements to $B_2$. Else, if $B$ contains at least two elements in $\{ 2, 3 \}$, then we assign two of those elements to $B_1$ and the rest of the elements to $B_2$. Otherwise, if $B$ contains at least four elements in $\{ 4, 5, 6 \}$, then we assign four of those elements to $B_1$ and the rest to $B_2$. In all of the aforementioned cases, $B_1$ is schedulable with the Round-robin algorithm and $B_2$ is also schedulable because $d(B_2) = d(B) - d(B_1) \leq d(B) - \frac{2}{3} \leq \frac{2}{3}$. If none of the previous cases holds, then we assign every element of $B$ in $\{ 2, 3 \}$ to $B_1$ and every element of $B$ in $\{ 4, 5, 6 \}$ to $B_2$. The density of each subset would be less than or equal to $0.75$. Then, while $d(B_1) < 0.6$, we add the unassigned elements of $B$ to $B_1$, one by one. Each unassigned element is greater than or equal to $7$ and thus the amount that it adds to $d(B_1)$ is less than $0.15$. Hence, if $d(B_1) < 0.6$, the density of $B_1$ will not go above $0.75$ by adding one of the unassigned elements to the subset. If we can add all of the unassigned elements to $B_1$ while maintaining $d(B_1) < 0.6$, both $B_1$ and $B_2$ will be schedulable. If at any point, $d(B_1)$ becomes greater than or equal to $0.6$, we add the rest of the unassigned elements to $B_2$. This way, we have $d(B_1) \leq 0.75$ and $d(B_2) = d(B) - d(B_1) \leq d(B) - 0.6 \leq 0.75$;
thereby, both subsets are schedulable.

Now, we prove that if $d(A) \leq 2$, we can partition $A$ into three schedulable subsets $A_1$, $A_2$, and $A_3$. If $A$ contains either one element equivalent to $1$, two or more elements in $\{ 2, 3 \}$, four or more elements in $\{ 4, 5, 6 \}$, or five or more elements equivalent to $7$, we can choose $A_1$ such that it is schedulable and its density is greater than or equal to $\frac{2}{3}$. This way, we need to partition $A - A_1$ into two schedulable subsets. We know this is possible because $d(A - A_1) \leq \frac{4}{3}$. If none of the previous cases holds, then we assign every element of $A$ in $\{ 2, 3 \}$ to $A_1$, every element of $A$ in $\{ 4, 5, 6 \}$ to $A_2$, and every element equivalent to $7$ to $A_3$. The density of each subset would be less than or equal to $0.75$. Then, one by one, we add the unassigned elements of $A$ to the subset with the lowest density at the moment. Each unassigned element is greater than or equal to $8$. Since $d(A) \leq 2$, the density of the subset with the lowest density before adding an unassigned element with value $k$ is less than or equal to $\frac {2 - \frac{1}{k}}{3}$. Therefore, after adding the element, the density of the subset will be less than or equal to $\frac {2 - \frac{1}{k}}{3} + \frac{1}{k} = \frac{2}{3}  (1 + \frac{1}{k}) \leq \frac{2}{3}  (1 + \frac{1}{8}) = 0.75$. Consequently, we can add all of the unassigned elements to the three subsets without letting their densities exceed $0.75$; thereby, all three subsets will be schedulable.
\qed
\end{proof}

\section{Proof of Proposition~\ref{prop:1c1dOmega}}\label{app:1c1dOmega}

\begin{proof}
We consider an example with $n$ static objects located at the origin. At any point after at least $n-1$ queries, there will always be at least one object that the optimal algorithm will have not acquired its exact location in the last $n-1$ queries up to that point. The size of the uncertainty region of that object will be at least $2  v  (n-1)$. Furthermore, the size of the uncertainty region for the maximum and the minimum will be at least $v  (n-1)$. Consequently, the maximum size of the uncertainty region for the $1$-center will be $\Omega(v  n)$ for the optimal algorithm.
\qed
\end{proof}

\section{Proof of Proposition~\ref{prop:1cdiffspeed}}\label{app:1cdiffspeed}

\begin{proof}
The Round-robin algorithm keeps the maximum size of the uncertainty region for the $1$-center within $O(v_M  n)$.
On the other hand, the maximum size of the uncertainty region for the $1$-center is $\Omega(v_m)$ for the optimal algorithm.
Hence, the competitive ratio of the Round-robin algorithm for the $1$-center problem is $O(v_M  n / v_m)$.
\qed
\end{proof}

\section{Proof of Proposition~\ref{prop:1csizeunb}}\label{app:1csizeunb}

\begin{proof}
Consider an example with three static objects located at $(-2L, 0)$, $(2L, 0)$, and $(0, G)$, where $L$ is arbitrarily large and $G$ is much larger than $L$ in a way that the line segment between $(0, G)$ and $(2L, -v)$ is a diameter of a circle that encloses $(-2L, 0)$. Figure~\ref{fig:unbounded1center} illustrates this example.
If the object located at $(2L, 0)$ is not queried first, its uncertainty region after the first query contains $(2L, 0)$ and $(2L, -v)$. As a result, the uncertainty regions of the objects after the first query suggest that it is possible for the $1$-center to be located on the line $x = 0$ or the line $x = L$.
Thus, the size of the uncertainty region for the $1$-center will be at least $L$.
Similarly, if the object located at $(-2L, 0)$ is not queried first, the size of the uncertainty region for the $1$-center will be at least $L$.
We deduce that regardless of which object is queried first, the size of the uncertainty region for the $1$-center will be at least $L$ which is arbitrarily large.
Consequently, the maximum size of the uncertainty region for the $1$-center in $\mathbb{R}^d$ can be unbounded for the optimal algorithm.
\qed

\begin{figure}[ht]
    \centering
    \includegraphics[width=7cm]{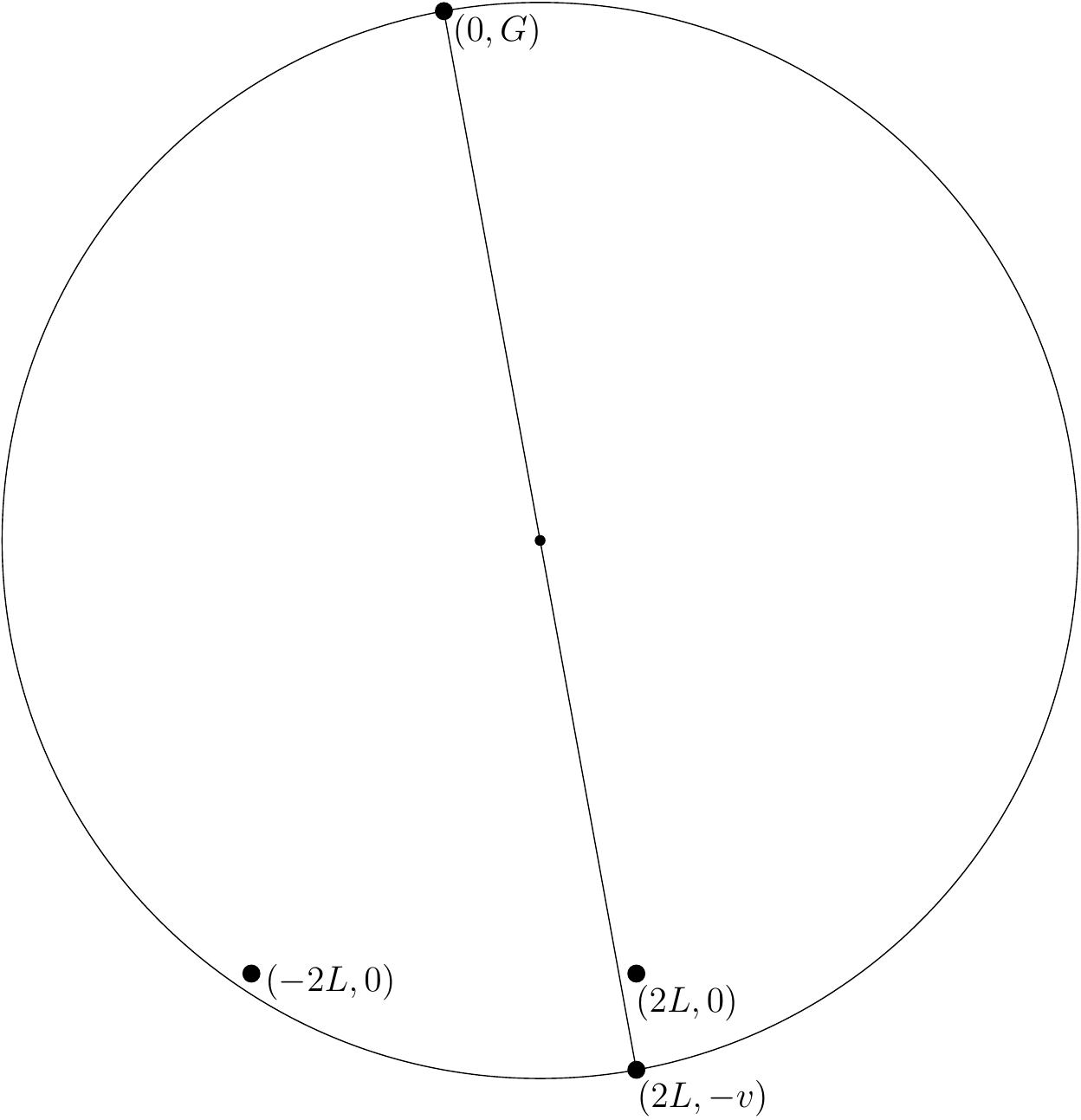}
    \caption{Illustration for the example provided in the proof of Proposition~\ref{prop:1csizeunb}.}
    \label{fig:unbounded1center}
\end{figure}

\end{proof}

\section{Proof of Lemma~\ref{lem:CoM}}\label{app:CoM}

\begin{proof}
For $1 \leq i \leq n$, let $\mathbf{c}_i$ and $r_i$ denote the center and the radius of the uncertainty region of the $i$-th object, respectively. Also, let $w_i$ be the weight of the $i$-th object and $W = \sum_{i=1}^{n}{w_i}$ be the total weight of all objects. We want to prove that the uncertainty region for the center of mass is a ball centered at $\mathbf{C} = \frac{1}{W} \sum_{i=1}^{n}{w_i \mathbf{c}_i}$ with radius $R = \frac{1}{W} \sum_{i=1}^{n}{w_i r_i}$.

Let $\mathbf{q}_1, ..., \mathbf{q}_n$ represent the locations of $n$ arbitrary points such that for $1 \leq i \leq n$, the $i$-th point is within the uncertainty region of the $i$-th object. We show that the weighted average of these points is within a ball centered at $\mathbf{C}$ with radius $R$. We have
\[
    |\frac{1}{W} \sum_{i=1}^{n}{w_i \mathbf{q}_i} - \mathbf{C}| = |\frac{1}{W} \sum_{i=1}^{n}{w_i (\mathbf{q}_i - \mathbf{c}_i)}|
    \leq \frac{1}{W} \sum_{i=1}^{n}{w_i |\mathbf{q}_i - \mathbf{c}_i|}
    \leq \frac{1}{W} \sum_{i=1}^{n}{w_i r_i}
    = R.
\]

Now, let $\mathbf{U}$ represent the location of an arbitrary point within a ball centered at $\mathbf{C}$ with radius $R$. We show that $\mathbf{U}$ is the weighted average of $n$ points $\mathbf{u}_1, ..., \mathbf{u}_n$ such that for $1 \leq i \leq n$, $\mathbf{u}_i = \mathbf{c}_i + \frac{r_i}{R} (\mathbf{U} - \mathbf{C}) $ and hence, the $i$-th point is within the uncertainty region of the $i$-th object. We have
\[
\frac{1}{W} \sum_{i=1}^{n}{w_i \mathbf{u}_i}
=
\frac{1}{W} \sum_{i=1}^{n}{w_i (\mathbf{c}_i + \frac{r_i}{R}(\mathbf{U} - \mathbf{C}))}
=
\mathbf{C} + \left(\frac{1}{WR} \sum_{i=1}^{n}{w_i r_i}\right)(\mathbf{U} - \mathbf{C})
=
\mathbf{U} .
\]
Therefore, the uncertainty region for the center of mass is a ball centered at $\mathbf{C}$ with radius $R$.
\qed
\end{proof}

\section{Proof of Theorem~\ref{thm:cent}}\label{app:cent}

\begin{proof}
At any time after the Round-robin algorithm makes $n$ queries, the size of the uncertainty region of the object queried $i$-th to last ($2 \leq i \leq n$) is $2  v  (i - 1)$.
Thus, by Lemma~\ref{lem:CoM}, the size of the
uncertainty region for the centroid
is
$\frac{1}{n}  \sum_{i=1}^{n}{2  v  (i - 1)} = v  (n-1)$.
Hence, the maximum size of the centroid’s uncertainty region for the Round-robin algorithm is $\Theta(v  n)$.

For the optimal algorithm, at any point after at least 
$n$
queries, the size of the $i$-th smallest uncertainty region ($1 \leq i \leq n$) among the uncertainty regions of all objects is at least $2  v  (i - 1)$. Consequently, by Lemma~\ref{lem:CoM}, the size of the uncertainty region for the centroid is at least $v  (n-1)$. Therefore, the Round-robin algorithm is $1$-competitive against the optimal algorithm. Furthermore, the maximum size of the centroid’s uncertainty region for the optimal algorithm is $\Theta(v  n)$.
\qed
\end{proof}

\section{Proof of Theorem~\ref{thm:1medmain}}\label{app:1medmain}

\begin{lemma} \label{lem:klow}
 The maximum size of the uncertainty region of the $k$-th smallest ($2 \leq k \leq n-1$), over all time is $\Omega(v)$, regardless of the query strategy.
\end{lemma}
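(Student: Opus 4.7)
The plan is to exhibit a single adversarial instance and a single time at which every query strategy is forced to have the $k$-th smallest's uncertainty region of size $\Omega(v)$. I would use the trajectory in which all $n$ objects start at the origin and remain stationary forever, and analyze the situation at time $t = 1$, immediately after the very first query. Whichever object was queried has uncertainty region $\{0\}$, while every other object has uncertainty region $[-v, v]$, since each could in principle have moved a distance up to $v$ in either direction during the one elapsed unit of time. This situation is independent of which object was chosen for the first query.

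With the instance fixed, I would compute $s_{(k)}$ and $e_{(k)}$ directly from these $n$ uncertainty regions. The multiset of start points is $n - 1$ copies of $-v$ together with a single $0$, so $s_{(k)} = -v$ for every $k \leq n - 1$. Symmetrically, the multiset of end points is a single $0$ together with $n - 1$ copies of $v$, so $e_{(k)} = v$ for every $k \geq 2$. Hence for $2 \leq k \leq n - 1$ the uncertainty region of the $k$-th smallest at $t = 1$ is exactly $[s_{(k)}, e_{(k)}] = [-v, v]$, whose size is $2v$. Since this is attained at one specific time, the maximum of the $k$-th smallest's uncertainty over all time is at least $2v = \Omega(v)$, regardless of the query strategy.

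There is no substantive obstacle, but the restriction $2 \leq k \leq n - 1$ (which forces $n \geq 3$) is doing real work: it guarantees that at least one unqueried object contributes a start of $-v$ strictly below the queried $0$ and at least one unqueried object contributes an end of $v$ strictly above it, so neither $s_{(k)}$ nor $e_{(k)}$ is pulled in to $0$ by the queried object. I would also remark briefly on why the argument picks $t = 1$ rather than a later time: at later times a clever query strategy may be able to shrink the $k$-th smallest's uncertainty (for example, by placing the queried point between two well-separated unqueried objects), but because the lemma concerns the maximum over time, the bound at $t = 1$ already suffices against every strategy.
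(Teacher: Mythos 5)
There is a genuine gap, and it is one of logical scope rather than computation. Your calculation for the all-at-the-origin instance is correct: after the first query, $s_{(k)}=-v$ and $e_{(k)}=v$ for $2\le k\le n-1$, so the $k$-th smallest has uncertainty $2v$ at $t=1$. But this establishes only that \emph{there exists} an instance on which every query strategy suffers $\Omega(v)$. The lemma, as it is used in the proof of Theorem~\ref{thm:1medmain}, must hold for \emph{every} instance: the $O(n)$ competitive ratio of Round-robin requires that on each input the optimal algorithm --- which knows the trajectories of that particular input --- still incurs $\Omega(v)$ at some time. A lower bound certified on one hand-picked instance says nothing about what the optimal algorithm can achieve on the adversary's instance, so your argument cannot be plugged into the competitive-ratio step. (What you have actually proved is essentially Proposition~\ref{prop:1m1dOmega} restated with a weaker constant, i.e., an existential worst-case bound, not the universal one.)

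The paper's proof is correspondingly much heavier: it fixes an \emph{arbitrary} instance and an \emph{arbitrary} algorithm $A$ that allegedly keeps the $k$-th smallest's uncertainty within $v/2$ forever, and derives a contradiction by tracking how the order statistics $s_{k,q}$ and $e_{k,q}$ must evolve query by query --- showing that $A$ would be forced to query the same object indefinitely, after which the growing uncertainty regions of the other $n-1$ objects eventually engulf it (this is where $2\le k\le n-1$ is used, in a different way than in your note). If you want to salvage your approach, you would need to argue about a general configuration of uncertainty regions at a general time, not about one trajectory at $t=1$; the observation that a later queried point can sit strictly between well-separated neighbors, which you mention only in passing, is exactly the case your argument does not handle and the case the paper's proof spends most of its effort on.
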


\begin{proof}
We prove this by contradiction.
We assume that there exists an example where using an algorithm $A$, the size of the uncertainty region for the $k$-th smallest will never exceed $\frac{v}{2}$.
For $1 \leq i \leq n$ and $q \geq 0$, let $s_{i, q}$ ($e_{i, q}$) be equal to the $i$-th smallest starting (ending) point among the starting (ending) points of the uncertainty regions of all objects after exactly $q$ queries are made by $A$.
We say that $s_{i, q}$ ($e_{i, q}$) bounds an object $b$ if it is equal to the starting (ending) point of the uncertainty region of $b$.
$[s_{i, q}, e_{i, q}]$ is the uncertainty region for the $i$-th smallest after $q$ queries. 
By our assumption, $e_{k, q}-s_{k, q} \leq \frac{v}{2}$ for all $q$.


Let us consider the case that for some $q \geq 0$, we have $e_{k, q} - e_{k-1, q} \leq \frac{v}{2}$. If $s_{k+1, q} \leq e_{k, q}$, after the $(q+1)$-th query, we will have at least $k$ starting points less than or equal to $e_{k, q} - v$. Also, we will have $e_{k, q+1} \geq e_{k-1, q} + v$. Thus, we will have $s_{k, q+1} \leq e_{k, q} - v \leq e_{k-1, q} - \frac{v}{2} \leq e_{k, q+1} - \frac{3v}{2}$. However, this is against the assumption that using algorithm $A$, the size of the uncertainty region for the $k$-th smallest will never exceed $\frac{v}{2}$. Consequently, we must have $s_{k+1, q} > e_{k, q}$. If the $(q+1)$-th query acquires the exact location of the object that was bounded by $e_{k, q}$ after $q$ queries, the new position of that object must be greater than or equal to $e_{k-1, q} + \frac{v}{2}$ so that we have $e_{k, q+1}-s_{k, q+1} \leq \frac{v}{2}$. In that case, we have $e_{k, q+1} \geq e_{k, q} + \frac{v}{2}$ and $e_{k, q+1} - e_{k-1, q+1} \leq \frac{v}{2}$. If the $(q+1)$-th query acquires the exact location of any other object, the starting point of the uncertainty region of that object before the $(q+1)$-th query must have been less than or equal to $s_{k, q}$ (this is a necessary condition for having $e_{k, q+1}-s_{k, q+1} \leq \frac{v}{2}$). Hence, in that case, $e_{k, q+1} = e_{k, q} + v$. Moreover, the new position of the queried object must be greater than or equal to $e_{k, q} + \frac{v}{2}$ so that we have $e_{k, q+1}-s_{k, q+1} \leq \frac{v}{2}$. For this reason, we have $e_{k, q+1} - e_{k-1, q+1} \leq \frac{v}{2}$. Therefore, in both cases, we have $e_{k, q+1} - e_{k-1, q+1} \leq \frac{v}{2}$, $e_{k, q+1} \geq e_{k, q} + \frac{v}{2}$ and $s_{k+1, q+1} < s_{k+1, q}$. We deduce that for some $q' > q$, we will have $e_{k, q'} - e_{k-1, q'} \leq \frac{v}{2}$ and $s_{k+1, q'} \leq e_{k, q'}$. As a result, the size of the uncertainty region for the $k$-th smallest will exceed $\frac{v}{2}$. Hence, if we have $e_{k, q} - e_{k-1, q} \leq \frac{v}{2}$ for some $q \geq 0$, the size of the uncertainty region for the $k$-th smallest will exceed $\frac{v}{2}$ which is against the assumption about algorithm $A$. Similarly, we can show that if we have $s_{k+1, q} - s_{k, q} \leq \frac{v}{2}$ for some $q \geq 0$, the size of the uncertainty region for the $k$-th smallest will exceed $\frac{v}{2}$ at some point.

We now consider the case that for some $q \geq 0$, $s_{k, q}$ and $e_{k, q}$ bound two different objects. If the object bounded by $e_{k, q}$ was not queried last, then at most $k-2$ ending points can be smaller than $s_{k, q}$; thereby, $s_{k, q} \leq e_{k-1, q}$. Therefore, we have $e_{k, q} - e_{k-1, q} \leq e_{k, q} - s_{k, q} \leq \frac{v}{2}$. If the object bounded by $e_{k, q}$ was queried last, then we have a starting point at $e_{k, q}$. In that case, we have $s_{k+1, q} - s_{k, q} \leq e_{k, q} - s_{k, q} \leq \frac{v}{2}$. We deduce that either $e_{k, q} - e_{k-1, q} \leq \frac{v}{2}$ or $s_{k+1, q} - s_{k, q} \leq \frac{v}{2}$. As a result, the size of the uncertainty region for the $k$-th smallest will exceed $\frac{v}{2}$ at some point which is against the assumption about algorithm $A$.

Hence, for every $q \geq 0$, $s_{k, q}$ and $e_{k, q}$ must bound the same object. Furthermore, we must have $e_{k, q} - e_{k-1, q} > \frac{v}{2}$ and $s_{k+1, q} - s_{k, q} > \frac{v}{2}$. Since we assume that $e_{k, q}-s_{k, q} \leq \frac{v}{2}$, we must have the exact location of the object bounded by $s_{k, q}$ and $e_{k, q}$ after $q$ queries. This means that $e_{k, q}-s_{k, q} = 0$ and if $q > 0$, the $q$-th query must have acquired the exact location of the object bounded by $s_{k, q}$ and $e_{k, q}$. If after $q$ queries ($q \geq 0$), we have the exact location of an object $t$, we know that $t$ has the $k$-th smallest position, and we also have $e_{k, q} - e_{k-1, q} > \frac{v}{2}$ and $s_{k+1, q} - s_{k, q} > \frac{v}{2}$, $t$ can potentially have the $k$-th smallest position after the $(q+1)$-th query, regardless of which object will be queried. Thus, no other object $t'$ exists such that we have its exact location after $q+1$ queries and that we know with certainty that it has the $k$-th smallest position. Therefore, if at any point, algorithm $A$ queries an object whose initial position was not the $k$-th smallest, the size of the uncertainty region for the $k$-th smallest will eventually exceed $\frac{v}{2}$. On the other hand, if algorithm $A$ always queries the same object $t$ whose initial position was the $k$-th smallest, then the uncertainty regions of the other objects will become extensive. In that case, considering that the initial position of $t$ was the $k$-th smallest ($2 \leq k \leq n-1$), the uncertainty region of at least one other object will include the position of $t$ at some point in the future. When this happens, the position of $t$ will not necessarily be the $k$-th smallest. Consequently, the size of the uncertainty region for the $k$-th smallest will exceed $\frac{v}{2}$ at some point.

In every case, we have proved that using algorithm $A$, the size of the uncertainty region for the $k$-th smallest will exceed $\frac{v}{2}$. This contradicts our assumption about algorithm $A$. Therefore, for every algorithm, the maximum size of the uncertainty region for the $k$-th smallest is greater than $\frac{v}{2}$. Thus, when $2 \leq k \leq n-1$, regardless of the query strategy, the maximum size of the uncertainty region for the $k$-th smallest is $\Omega(v)$.
\qed
\end{proof}

\TheoremOneMedMain*

\begin{proof}
When $n$ is odd, the $1$-median problem is the same as the $\lceil \frac{n}{2} \rceil$-th smallest problem. Moreover, when $n$ is even, we consider the $1$-median to be the midpoint of the $\frac{n}{2}$-th smallest and $(\frac{n}{2}+1)$-th smallest points. In this case, the size of the uncertainty region for the $1$-median is equivalent to the average of the size of the uncertainty region for the $\frac{n}{2}$-th smallest and the size of the uncertainty region for the $(\frac{n}{2}+1)$-th smallest.

The Round-robin algorithm keeps the size of the uncertainty region of each object within $O(v  n)$ because it acquires the exact location of each object once in every $n$ queries.
For $1 \leq k \leq n$, let $s_k$ be the $k$-th smallest starting point among the starting points of the uncertainty regions of all objects.
Also, let $e_k$ be the $k$-th smallest ending point among the ending points of the uncertainty regions of all objects.
Hence, $[s_k, e_k]$ is the uncertainty region for the $k$-th smallest.
Using the Round-robin algorithm, we know that $e_k - s_k$ will always be $O(v  n)$ because the algorithm keeps the size of the uncertainty region of each object within $O(v  n)$.
Thus, the size of the uncertainty region for the $k$-th smallest is bounded by $O(v  n)$ at any point in the future.
Therefore, using the Round-robin algorithm, the size of the uncertainty region for the $1$-median will never exceed $O(v  n)$.

If $n=2$, at any point after the first query, the size of the uncertainty region of one of the objects will be $\Omega(v)$. Thus, the maximum size of the uncertainty region for the $1$-median will be $\Omega(v)$.
If $n \geq 3$, by Lemma~\ref{lem:klow}, the maximum size of the uncertainty region for the $k$-th smallest ($2 \leq k \leq n-1$) will be $\Omega(v)$, regardless of the query strategy. 
Therefore, for the optimal algorithm, the size of the uncertainty region for the $1$-median will be $\Omega(v)$ at some point in the future. Considering that the Round-robin algorithm keeps the maximum size of the uncertainty region within $O(v  n)$, it achieves a competitive ratio of $O(n)$ against the optimal algorithm.


Now, we show that $O(n)$ is the best possible competitive ratio against the optimal algorithm. We consider the following cases:
\begin{itemize}
    \item If $n$ is odd, we consider an example with $\lceil \frac{n}{2} \rceil$ objects initially located at the origin and $\lfloor \frac{n}{2} \rfloor$ objects initially located at $10  v  n$. One of the objects initially located at the origin moves with a speed of $v$ towards $3  v  n$ and then stays there. The rest of the objects do not move. Let us consider an algorithm that acquires the exact location of the moving object once in every two queries (even after the object stops at $3  v  n$) and assigns the rest of the queries to the static objects in a circular order. This algorithm maintains the size of the uncertainty region for the $\lceil \frac{n}{2} \rceil$-th smallest within $O(v)$.
    \item If $n$ is even, we consider an example with $\frac{n}{2}$ objects initially located at the origin and $\frac{n}{2}$ objects initially located at $10  v  n$. One of the objects initially located at the origin moves with a speed of $v$ towards $3  v  n$ and then stays there. Also, one of the objects initially located at $10  v  n$ moves with a speed of $v$ towards $7  v  n$ and then stays there. The rest of the objects do not move. We consider an algorithm that acquires the exact location of each of the two moving objects once in every three queries (even after the objects stop moving) and assigns the rest of the queries to the static objects in a circular order. This algorithm maintains the size of the uncertainty region for the $\frac{n}{2}$-th and $(\frac{n}{2}+1)$-th smallest within $O(v)$.
\end{itemize}
Therefore, in the examples provided for both cases, the optimal algorithm keeps the maximum size of the uncertainty region for the $1$-median within $O(v)$.
However, any algorithm that does not know the trajectories of the objects, such as Round-robin, may fail to query the moving object in the first $\lceil \frac{n}{2} \rceil - 1$ attempts;
hence, the size of the uncertainty region for the $\lceil \frac{n}{2} \rceil$-th smallest can become $\Omega(v  n)$ after the first $\lceil \frac{n}{2} \rceil - 1$ queries. For this reason, the size of the uncertainty region for the $1$-median can become $\Omega(v  n)$. This is true even if we allow the algorithm to query a constant number of objects (instead of one) per unit time. Thus, $O(n)$ is the best possible competitive ratio.
\qed
\end{proof}

\section{Proof of Proposition~\ref{prop:1m1dOmega}}\label{app:1m1dOmega}

\begin{proof}
We consider an example with $n$ static objects located at the origin. At any point after at least $\lfloor \frac{n}{2} \rfloor$ queries, there will always be at least $\lceil \frac{n}{2} \rceil$ objects that the optimal algorithm will have not acquired their exact locations in the last $\lfloor \frac{n}{2} \rfloor$ queries up to that point. The ending points of the uncertainty regions of those $\lceil \frac{n}{2} \rceil$ objects will be at least $v  \lfloor \frac{n}{2} \rfloor$. Thus, the $(\lfloor \frac{n}{2} \rfloor + 1)$-th smallest ending point among the ending points of the uncertainty regions of all objects will be at least $v  \lfloor \frac{n}{2} \rfloor$. On the other hand, the starting points of the uncertainty regions of all objects will always be less than or equal to zero. Therefore, the size of the uncertainty region for the $(\lfloor \frac{n}{2} \rfloor + 1)$-th smallest will be $\Omega(v  n)$ at any point after the first $\lfloor \frac{n}{2} \rfloor$ queries. For this reason, the maximum size of the uncertainty region for the $1$-median will be $\Omega(v  n)$ for the optimal algorithm.
\qed
\end{proof}

\section{Proof of Proposition~\ref{prop:1mdiffspeed}}\label{app:1mdiffspeed}

\begin{proof}
The Round-robin algorithm keeps the maximum size of the uncertainty region for the $1$-median within $O(v_M  n)$. On the other hand, the maximum size of the uncertainty region for the $1$-median is $\Omega(v_m)$ for the optimal algorithm. Thus, the competitive ratio of the Round-robin algorithm for the $1$-median problem is $O(v_M  n / v_m)$.
\qed
\end{proof}

\section{Proof of Proposition~\ref{prop:1msizeunb}}\label{app:1msizeunb}

\begin{proof}
Consider an example with four static objects located at $(0, 0)$, $(0, v)$, $(L, 0)$, and $(L, v)$, where $L$ is arbitrarily large. Regardless of which object is queried first, the uncertainty regions of the objects after the first query suggest that it is possible to have two objects with the exact same location on the line $x = 0$ and two objects with different locations on the line $x = L$. In this case, the $1$-median will be located on the line $x = 0$. Furthermore, the uncertainty regions of the objects after the first query suggest the possibility of having two objects with different locations on the line $x = 0$ and two objects with the exact same location on the line $x = L$. In this case, the $1$-median will be located on the line $x = L$. Therefore, regardless of which object is queried first, the size of the uncertainty region for the $1$-median will be at least $L$ which is arbitrarily large. Consequently, the maximum size of the uncertainty region for the $1$-median in $\mathbb{R}^d$ can be unbounded for the optimal algorithm.
\qed
\end{proof}

\end{document}